\newtheorem{lem}{Lemma}
\newtheorem{cor}{Corollary}
\newtheorem{assp}{Assumption}
\newtheorem{thm}{Theorem}
\newtheorem{remark}{Remark}
\newcommand{\ind}{\perp\!\!\!\!\perp}
\newcommand{\M}{\mathcal{M}}
\newcommand{\full}{*}
\newcommand{\Pf}{\P^{\full}}
\newcommand{\E}{\mathbb{E}} 
\newcommand{\PP}{\mathbb{P}} 
\newcommand{\mO}{\mathcal{O}}
\renewcommand{\P}{\mathbb{P}} 
\renewcommand{\d}{\mathrm{d}}
\newcommand{\mC}{\mathcal{C}}
\begin{document}

\title{On Causal Inference for the Survivor Function}
\date{}
\author{Benjamin R. Baer \\
School of Mathematics and Statistics, University of St Andrews \\ St Andrews,  KY16 9SS, Scotland \\
https://orcid.org/0000-0002-9662-9377\\[.5ex]
Ashkan Ertefaie \\
Department of Biostatistics, Epidemiology \& Informatics\\
University of Pennsylvania, Philadelphia, Pennsylvania, U.S.A. \\
https://orcid.org/0000-0003-2611-9512 \\[.5ex]
Robert L. Strawderman \footnote{Corresponding author: \href{robert\_strawderman@urmc.rochester.edu}{robert\_strawderman@urmc.rochester.edu}
} \\ Department of Biostatistics and Computational Biology 
\\ University of Rochester, Rochester, New York, U.S.A. \\
https://orcid.org/0000-0002-6624-0272 \\ [.5ex]
}

\maketitle

\begin{abstract}
   In this expository paper, we consider the problem of causal inference and efficient estimation for the counterfactual survivor function. This problem has previously been considered in the literature in several papers, each relying on the imposition of conditions meant
   to identify the desired estimand from the observed data. These conditions, generally referred 
   to as either implying or satisfying coarsening at random, are inconsistently imposed across this literature 
   and, in all cases, fail to imply coarsening at random. We establish the first general characterization of
   coarsening at random, and also sequential coarsening at random, for this estimation problem.
Other contributions include the first general characterization of the set of all influence functions for the counterfactual survival probability under sequential coarsening at random, and the corresponding nonparametric efficient influence function. These characterizations are general in that neither impose continuity assumptions on either the underlying failure or censoring time distributions. We further show how the latter compares to alternative forms recently derived in the literature, including establishing the pointwise equivalence of the influence functions for our nonparametric efficient estimator and that recently given in \cite{westling2021inference}. \\[.5ex]
{\sc Keywords: Coarsening at random; Influence function; Martingale; Missing data; Nonparametric efficiency}
\end{abstract}

\newpage

\section{Introduction}

Survival analysis is a fundamental tool for the analysis of time-to-event data in various fields, such as medicine, epidemiology, and social sciences \citep{fleming1991counting, kalbfleisch2002statistical}. It involves studying the occurrence of events, such as death, disease progression, or failure, and understanding the underlying mechanisms that influence their timing. As researchers seek to go beyond descriptive analysis and understand the causal effects of interventions or treatments on survival outcomes, there is a growing interest in making valid causal inferences \citep{hernan2023what}.  

Of interest in this paper is causal inference for  the counterfactual failure survival function, a problem that has been previously studied by \citet{hubbard2000nonparametric, chen2001causal, anstrom2001utilizing, bai2013doubly, bai2017optimal, cai2020one, rytgaard2022continuous, westling2021inference}, among others. As part of our development, several inconsistencies in this literature pertaining to characterizations of ``coarsening at random'' (CAR) are carefully addressed and the first proper characterization of CAR and sequential CAR for causal survival analysis are rigorously developed. Under sequential CAR, we also characterize the class of all influence functions for the desired estimand as a class of augmented inverse probability weighted estimators, and determine the nonparametric efficient influence function. This efficiency theory is developed using approaches recently explained by \citet{hines2021demystifying} (see also \citealp{ kennedy2022semiparametric}), leading to estimators that are asymptotically efficient and robust to nuisance parameter estimation. Importantly, we do not make any absolute continuity assumptions, working with essentially arbitrary probability distributions for the observed data, as well as the required latent distributions for failure and censoring.  In related developments, we also refine results in earlier work where smaller classes of influence functions containing the efficient influence function have previously been proposed \citep[e.g.,][]{anstrom2001utilizing, bai2013doubly, bai2017optimal}. Finally, we prove by direct calculation that 
our nonparametric efficient influence function is pointwise equivalent to that recently derived in \citet{westling2021inference}.

In a companion paper, we carefully study the problem of causal inference for recurrent event processes when observation is terminated by death (i.e., failure, or a terminal event) \citep{BaerStrawErt25}. Many of the key technical results in this paper can actually be derived as special cases of the results in \citet{BaerStrawErt25}; we leverage these connections where appropriate to de-emphasize technical proofs here, instead focusing on enhancing understanding as to how previous developments are related to the current work.

\section{Notation: Counterfactual and Observable Random Variables}

Let $L$ denote a vector consisting of baseline variables for an individual. Assuming there is a binary treatment variable taking on the values zero and one, let $T^{\full}_0, T^{\full}_1$ be the potential failure times had an individual not been or been treated, respectively. We can then regard the unobservable full data for an individual as $(L, T^{\full}_0, T^{\full}_1).$ 

In causal inference for the counterfactual survival function, there are multiple coarsening variables involved. We denote the observed treatment (i.e., exposure) variable as $A,$ where $A = 1$ when treatment is assigned to an individual and $A = 0$ otherwise. We also define the potential coarsening variables $C^{\full}_0, C^{\full}_1,$ where $C^{\full}_a, a=0,1$ denotes the times that observation ceases had the exposure been $a$ (i.e., for reasons other than failure). In the present setting, we formulate the causal selection (i.e., involving $A$) as occurring before censoring (i.e., involving $C^{\full}_A$). 
The observed data $\mO$ is then defined by limiting the availability of the full data through the following mapping, 
\begin{equation*}
    \Phi(L,T^{\full}_0, T^{\full}_1; A, C^{\full}_0, C^{\full}_1) 
    = \Bigl( L, A, \Delta=I\{T^{\full}_A \leq C^{\full}_A\}, X=\min\{T^{\full}_A, C^{\full}_A\} \Bigr),
\end{equation*}
where the coarsening variables determine which part of the full data is available. As is evident in $\mO$, it is assumed
that the baseline variables $L$ and the exposure $A$ are observed, along with the failure indicator $\Delta$ and the observed time at risk $X$. Their definition incorporates the usual longitudinal assumption in survival analysis and the consistency assumption in causal inference. 
For later use, we also define the observed failure process $N_T(t) = I\{X \leq t, \Delta=1\}$ and the observed censoring process $N_C(t) = I\{X \leq t, \Delta=0\}$. These are one-jump processes that jump when a failure or censored time is observed, respectively.

Define the joint distribution of the full and coarsening variables as $\Pf,$ and let $\P$ as the distribution of the observed data induced by $\Pf$. Throughout, we use an asterisk over functions and random variables to denote reliance on the full or coarsened variables; a function or random variable without an asterisk only depends on the observed data. For $a=0,1$ we define $H^{\full}(u; a, l) = \Pf \{T^{\full}_a > u \mid L=l\}$ to be the conditional survival function for failure for the untreated and treated populations; similarly, let  $K^{\full}(u; a, l) = \Pf\{C^{\full}_a > u \mid L=l\}$ be the conditional censoring distribution and $\pi(a; l) = \PP\{A=a \mid L=l\}$ be the propensity score. Throughout, we adopt the convention that $0/0 = 0$.

\section{Counterfactual (i.e., Treatment-Specific) Survival Curves}
\label{sec:litreview:surv}

Define $\eta^*_a(t) = \PP^*\{T^*_a > t\}$ as the counterfactual survival function for failure. The study of this causal estimand has a reasonably long history; to the authors' knowledge, careful study began with \citet{hubbard2000nonparametric}, with increasing intensity of development over the past 25 years that includes, but is not limited to, \citet{van2000locally,chen2001causal,bai2013doubly,zhao2015doubly,rytgaard2022continuous,rytgaard2021estimation} and \citet{westling2021inference}.

\subsection{A proper characterization of coarsening at random (CAR)}
\label{sec:litreview:surv:car}

The following result gives the first proper characterization of CAR, and also sequential CAR, for the causal survival problem with
general failure and censoring distributions (i.e., under the indicated temporal ordering of coarsening of treatment selection, followed
by censoring).

\begin{thm}
\label{thm:car_caus_surv}
 The observed data mapping satisfies CAR if and only if for each $a=0,1$,
 \begin{align*}
     & (T^{\full}_0, T^{\full}_1) \ind A \mid L, \\
     & (T^{\full}_0, T^{\full}_1) \ind C^{\full}_a \mid A=a, L \text{ on } C^{\full}_a < T^{\full}_a, \\
     & T_{1-a}^{\full} \ind I(T_a^{\full} \leq C_a^{\full}) \mid A=a, L, T^{\full}_a.
 \end{align*}
 Additionally, the observed data mapping satisfies sequential CAR if and only if for each $a=0,1$,
    \begin{align*}
        & (T^{\full}_0, T^{\full}_1) \ind A \mid L, \\
        & T^{\full}_a \ind C^{\full}_a \mid A=a, L \text{ on } C^{\full}_a < T^{\full}_a. 
    \end{align*}
\end{thm}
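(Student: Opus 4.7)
The plan is to apply the general Gill--van der Laan--Robins characterization of CAR to the coarsening scheme defined by $\Phi$: with full data $(L, T^{\full}_0, T^{\full}_1)$ and coarsening variables $(A, C^{\full}_0, C^{\full}_1)$, CAR requires the conditional law under $\Pf$ of the coarsening variables given the full data to depend on the full data only through the observation $\mO$ that would result. Because $C^{\full}_{1-A}$ never enters $\mO$, I can marginalize it out and reason only about $(A, C^{\full}_A)$ given $(L, T^{\full}_0, T^{\full}_1)$.

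For the forward direction I split on $\Delta$. On $\{\Delta=0,\, X=x\}$, CAR becomes the requirement that $\Pf\{A=a,\, C^{\full}_a \in \d x \mid L, T^{\full}_0, T^{\full}_1\}\, I\{T^{\full}_a > x\}$ depend on $(L, T^{\full}_0, T^{\full}_1)$ only through $(L, a, x)$. Factoring as $\Pf\{A=a \mid L, T^{\full}_0, T^{\full}_1\} \cdot \Pf\{C^{\full}_a \in \d x \mid A=a, L, T^{\full}_0, T^{\full}_1\}$ cleanly isolates the no-confounding condition $A \ind (T^{\full}_0, T^{\full}_1) \mid L$ and the conditional-censoring condition $(T^{\full}_0, T^{\full}_1) \ind C^{\full}_a \mid A=a, L$ on $\{C^{\full}_a < T^{\full}_a\}$. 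On $\{\Delta=1,\, X=T^{\full}_a\}$, CAR requires $\Pf\{A=a,\, C^{\full}_a \geq T^{\full}_a \mid L, T^{\full}_0, T^{\full}_1\}$ to depend only on $(L, a, T^{\full}_a)$; once the treatment factor is stripped off by the no-confounding condition, the surviving factor is $\Pf\{T^{\full}_a \leq C^{\full}_a \mid A=a, L, T^{\full}_0, T^{\full}_1\}$, and requiring this to depend only on $(L, a, T^{\full}_a)$ is exactly $T^{\full}_{1-a} \ind I\{T^{\full}_a \leq C^{\full}_a\} \mid A=a, L, T^{\full}_a$. Reversing each step supplies the converse.

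For sequential CAR I use the time-ordered version of the same definition, in which each coarsening decision is conditioned only on the realized-arm trajectory rather than on both potential outcomes. Two decisions arise: treatment assignment at baseline, and the continuously evolving censoring hazard during follow-up on the realized arm. The baseline decision again yields $A \ind (T^{\full}_0, T^{\full}_1) \mid L$. For the censoring decision, on the at-risk set $\{T^{\full}_a > t,\, C^{\full}_a \geq t\}$ the instantaneous hazard of $C^{\full}_a$ conditional on $(A=a, L, T^{\full}_a)$ must depend only on $(L, t)$, which is equivalent to $T^{\full}_a \ind C^{\full}_a \mid A=a, L$ on $\{C^{\full}_a < T^{\full}_a\}$. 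Because the counterfactual $T^{\full}_{1-a}$ never enters the realized-arm trajectory, no analogue of the third CAR condition is imposed; this is the essential asymmetry with full CAR.

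The main obstacle is the measure-theoretic bookkeeping in the absence of any continuity assumption on the failure or censoring distributions: the ``densities'' above must be read as Radon--Nikodym derivatives against suitable dominating measures, and the atom at $\{X=T^{\full}_a\}$ coexists with any continuous component of $C^{\full}_a$. I would handle this by specializing the general coarsening decomposition developed in \cite{baer2023causalinferenceexpectednumber}, of which the present single-event setting is a special case obtained by suppressing the recurrent event process, and by verifying that the stated conditional independence statements are stable under the relevant restrictions to $\{C^{\full}_a < T^{\full}_a\}$ and $\{T^{\full}_a \leq C^{\full}_a\}$.
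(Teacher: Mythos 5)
Your overall strategy---work from the density-based definition of CAR, identify the set $\mathcal{X}(l,a,\delta,x)$ of full-data values compatible with an observation, factor the conditional density of $(A,\Delta,X)$ given $(L,T^{\full}_0,T^{\full}_1)$ into a propensity factor times a censoring/failure factor, and split on $\Delta$---is the same as the paper's. The genuine gap is the step where you assert that this factorization ``cleanly isolates'' the no-confounding condition $(T^{\full}_0,T^{\full}_1)\ind A\mid L$. CAR constrains only the \emph{product} of the two factors to be constant on each fiber $\mathcal{X}(l,a,\delta,x)$; it does not constrain either factor separately, since the propensity could vary with $(t_0,t_1)$ while the censoring factor compensates. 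The paper closes this in two steps that your proposal omits. First, it integrates the observed-data density over all observations with $A=a$; because $t_{1-a}$ is unconstrained on every fiber, this yields $[A=a\mid L, T^{\full}_0, T^{\full}_1]=[A=a\mid L, T^{\full}_a]$, i.e.\ the propensity cannot depend on the other arm's counterfactual---but it may still depend on $T^{\full}_a$. Second, it invokes the normalization $[A=1\mid L, T^{\full}_1=t_1]+[A=0\mid L, T^{\full}_0=t_0]=1$ for all $(t_0,t_1)$; fixing $t_0$ and varying $t_1$ then kills the residual dependence on $T^{\full}_a$ and delivers the joint condition. Without that second step your argument yields only the weaker statement that the propensity is a function of $(L,T^{\full}_a)$, which is precisely the marginal-type condition the paper is at pains to distinguish from the CAR requirement in Section 3.1. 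Once the first condition is in hand, your extraction of the second and third conditions by dividing out the propensity on the $\delta=0$ and $\delta=1$ events matches the paper's argument, and ``reversing each step'' for the converse is an acceptable gloss on what the paper does explicitly.

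A smaller point: your sequential CAR argument is a sketch of the right idea (only the realized-arm trajectory enters each coarsening decision, so no cross-world condition on $T^{\full}_{1-a}$ arises), but phrasing the censoring requirement in terms of an ``instantaneous hazard'' presupposes a hazard representation that needs justification in the absence of any continuity assumption; the equivalence between that pointwise hazard statement and the event-restricted independence $T^{\full}_a\ind C^{\full}_a\mid A=a,L$ on $C^{\full}_a<T^{\full}_a$ is exactly what has to be proved. Your plan to defer the measure-theoretic details to the general coarsening decomposition in the companion paper is consistent with how the paper itself handles the sequential case, but as written it is a deferral rather than a proof.
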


A proof for the indicated characterizations of CAR and sequential CAR may be found in the Appendix.
Generally, CAR does not imply sequential CAR; the reverse statement is also true
\citep[e.g.,][]{gill1997sequential}.  

We now turn to several important papers in the area of causal survival analysis and consider the conditions
imposed for the purposes of identifiability and estimation. To our knowledge, \citet{hubbard2000nonparametric} 
is the first to attempt to characterize CAR for the causal survival problem, imposing the conditions
\begin{align*}
    & (T_0^*,T_1^*)  \ind A \mid L,\\
    & [[C^*_a=c \mid A=a,L,T^*_a]] = [[C^*_a=c \mid A=a,L]] \text{ if } c<T^*_a,
\end{align*}
where $[[\cdot]]$ denotes the Lebesgue hazard. However,
this characterization is evidently a special case of sequential CAR as given in the theorem above, and 
does not imply CAR in general. Parenthetically, 
each of \citet{moore2009increasing}, \citet{stitelman2010collaborative}, and \citet{cai2020one} also assert CAR, but fail to provide 
any explicit characterization of this condition. 
\citet{chen2001causal} instead assume that
    \begin{align*}
        & (T^{\full}_0, T^{\full}_1) \ind A \mid L, \\
        & (T^{\full}_0, T^{\full}_1) \ind C \mid A, L,
    \end{align*}
noting that the second condition further implies
that $T \bot C \mid A,L,$ where $T = A T^{\full}_1 + (1-A) T^{\full}_0.$
These authors do not make reference to CAR, per se, and also
do not appear to acknowledge the possibility of a counterfactual censoring time. However, if one assumes that their potential censoring time $C = A C^{\full}_1 + (1-A) C^{\full}_0$ and/or $C = C^*_0=C_1^*,$ then it can 
again be seen that their proposed conditions imply sequential CAR, but not CAR.

In later work, other conditions are imposed. For example,
in \citet{bai2013doubly}, it is assumed that
\begin{align*}
    & T^*_a \ind A \mid L, \\
    & C^*_a \ind T^*_a \mid A, L, \\
    & C^*_0=C_1^*
\end{align*}
for each $a=0,1.$ They further posit that these assumptions imply monotone CAR. Similarly, \citet{rytgaard2021estimation} 
assume, for each $a=0,1$, 
\begin{align*}
    & T^*_a \ind A \mid L, \\
    & C^*_a \ind T^*_a \mid A=a, L,
\end{align*}
and suggest that these conditions imply CAR. Finally,
in \citep{westling2021inference}, the conditions
\begin{align*}
    & T^*_a \ind A \mid L, \\
    & C^*_a \ind T^*_a \mid A, L, \\
    & C^*_a \ind A \mid L,
\end{align*}
for each $a=0,1$ are imposed. These conditions are similar
to those in \citet{bai2013doubly}, though no claim of
implying CAR or sequential CAR is made; we note that the
last condition serves a different purpose
and is not relevant to identification of the desired counterfactual 
survival curves.

In each case, the stated conditions neither lead to CAR nor sequential 
CAR as formulated in Theorem \ref{thm:car_caus_surv}. 
For example, in each of these papers, it can be seen that the second conditions are related to, 
but respectively stronger than, the corresponding condition for either CAR or sequential CAR in 
Theorem \ref{thm:car_caus_surv}. It can be also seen
that the conditions imposed in \citet{bai2013doubly} and
\citet{westling2021inference} imply the 
corresponding condition in \citet{rytgaard2021estimation}, but
additionally impose a cross-world independence assumption
in the case where $A \neq a$ \citep{andrews2020insights}. 
In contrast, the first condition in each of these papers is
weaker than the corresponding condition in Theorem \ref{thm:car_caus_surv} 
for both CAR and sequential CAR.  To see this, observe
that the conditional independence condition 
$(T^{\full}_0, T^{\full}_1) \ind A \mid L$ implies, 
for each relevant $t,u,$ that
\begin{equation}
\label{CAR cond}
\Pf\{ T^{\full}_a \leq t, T^{\full}_{1-a} \leq u, \mid A, L\}
= \Pf\{ T^{\full}_a \leq t, T^{\full}_{1-a} \leq u, \mid L\};
\end{equation}
letting $u \rightarrow \infty$ then shows
$
\Pf\{ T^{\full}_a \leq t \mid A, L\}
= \Pf\{ T^{\full}_a \leq t \mid L\}.
$
That is, assuming $(T^{\full}_0, T^{\full}_1) \ind A \mid L$ immediately implies the 
marginal independence conditions $T^{\full}_a \ind A  \mid L$ for $a=0,1$. However, 
because imposing $T^{\full}_a \ind A  \mid L$ for $a=0,1$ does not restrict the joint distribution of 
$(T^{\full}_0,T^{\full}_1),$ such marginal independence is necessarily 
weaker than the required condition under CAR (i.e., $(T^{\full}_0, T^{\full}_1) \ind A \mid L$) unless 
one further imposes the condition $T^{\full}_0 \ind T^{\full}_1 \mid A, L.$ To see that this latter
condition, combined with the marginal independence
condition $T^{\full}_a \ind A  \mid L,$ implies $(T^{\full}_0, T^{\full}_1) \ind A \mid L,$ 
we first note that $T^{\full}_a \ind A  \mid L$ implies 
\[
\Pf\{ T^{\full}_a \leq t \mid A, L\} 
\Pf\{ T^{\full}_{1-a} \leq u \mid A, L\} 
=
\Pf\{ T^{\full}_a \leq t \mid L\} 
\Pf\{ T^{\full}_{1-a} \leq u \mid L\}.
\]
However,
\[
\Pf\{ T^{\full}_a \leq t, T^{\full}_{1-a} \leq u \mid A, L\} =
\Pf\{ T^{\full}_a \leq t \mid A, L\} 
\Pf\{ T^{\full}_{1-a} \leq u \mid A, L\} 
\]
if and only if $T^{\full}_0 \ind T^{\full}_1 \mid A, L.$
Hence, in combination, 
the conditions $T^{\full}_0 \ind T^{\full}_1 \mid A, L$
and $T^{\full}_a \ind A  \mid L, a=0,1$ imply
\[
\Pf\{ T^{\full}_a \leq t, T^{\full}_{1-a} \leq u \mid A, L\}
= \Pf\{ T^{\full}_a \leq t \mid L\} 
\Pf\{ T^{\full}_{1-a} \leq u \mid L\}.
\]
Because the right-hand side is independent of $A,$
the left-hand side must also be independent of $A$;
that is, we have
$
\Pf\{ T^{\full}_a \leq t, T^{\full}_{1-a} \leq u \mid A, L\}
= \Pf\{ T^{\full}_a \leq t, T^{\full}_{1-a} \leq u \mid L\},
$
which is exactly \eqref{CAR cond}. 

In summary, to the authors' knowledge, Theorem \ref{thm:car_caus_surv} provides the first 
general characterization of CAR and sequential CAR in the causal survival analysis setting. As 
suggested by the general lack of equivalence of conditions that have been imposed
in the literature, neither CAR nor sequential CAR can be considered as necessary conditions for 
identifying $\eta^*_a$. Nevertheless, each remains conceptually important and provide sufficient 
conditions that will ensure the observed data likelihood factors into an ignorable term involving 
the coarsening variables and a term involving the full data \citep[e.g.,][]{cator2004testability}.

\begin{remark}
As noted above, the conditions imposed in \citet{bai2013doubly} neither imply CAR nor sequential CAR as formulated in 
Theorem \ref{thm:car_caus_surv}. Importantly, however, their conditions do imply monotone coarsening under a missing data model in which
$T^\full_1$ is assumed to be observed when $A = 1$ and $\Delta = 1$ and is considered to be missing otherwise. This approach, 
focused on estimating a functional of $\Pf\{T^\full_1 \leq t \},$ ignores the presence of the counterfactual $T^\full_0.$ As
will be seen later, this missing data view, although not truly causal, still leads to a valid but restricted class of influence functions for estimating $\Pf\{T^\full_1 \leq t \}.$ 
\end{remark}

\subsection{Observed Data Identification}
\label{sec:estimand:identif}

With the definitions of CAR and sequential CAR from  Theorem \ref{thm:car_caus_surv} in place, we turn to the question
of observed data identification. Unless otherwise noted, all results that follow will be derived by imposing the following assumptions. In particular,
$\PP^*,$ $\PP$ and the relationship between them are assumed to satisfy the following set of assumptions.
\begin{assp}
\label{asp:finitevar}
 All random variables have finite variance.
\end{assp}
\begin{assp}
\label{asp:phi}
 The map $\Phi$ correctly specifies the relationship between the full and observed data.
\end{assp}
\begin{assp}
\label{asp:car}
 Sequential coarsening at random holds.
\end{assp}
\begin{assp}
\label{asp:trunc}
 There exists a cutoff $\tau < \infty$ such that $X \leq \tau$ almost surely and $t \leq \tau$.
\end{assp}
\begin{assp}
\label{asp:positivity}
 There exists $\epsilon>0$ so that $\epsilon < \pi(a; l) K^{\full}(\tau; a, l)$ almost surely for all supported $a,l$.
\end{assp}

Assumption \ref{asp:finitevar} in standard in efficiency theory \citep{bickel1993efficient}.  
Assumption \ref{asp:phi} captures the consistency assumption in causal inference that one of the counterfactuals is observed and the ``longitudinal assumption'' in survival analysis that the time at risk is the minimum of the potential failure and censoring times. 
Assumption \ref{asp:car} states that sequential coarsening at random holds and hence that the two conditions in 
Theorem \ref{thm:car_caus_surv} hold. 
Assumption \ref{asp:trunc} is a technical condition that truncates the the analysis to a closed interval of finite length; this greatly simplifies the
technical analysis \citep{gill1983large}.  Together with Assumption \ref{asp:positivity}, Assumption \ref{asp:trunc} also implies that the potential censoring distribution has support that exceeds the support of the potential failure distribution. We note that Assumption \ref{asp:trunc} and Assumption \ref{asp:positivity} can be forced to hold by redefining the potential failure time to be truncated at some sufficiently small $\tau>0;$ a related truncation condition used to ensure positivity is 
often imposed in work involving inverse probability censoring weighted estimators.

Under Assumption \ref{asp:car}, the term  
\begin{align*}
 \pi(a; l) K^{\full}(\tau; a, l) 
 = \Pf \bigl \{ A=a,C^{\full}_a>\tau \mid L=l, T^{\full}_0, T^{\full}_1 \bigr\}
\end{align*}
appearing in Assumption \ref{asp:positivity} is simply the coarsening probability given the full data.
This expression reinforces the important point that the causal survival analysis problem has 
more than one coarsening variable; see also \citet{westling2021inference}.

Finall, define the model $\M^{\full}$ as the set of full and coarsened variable distributions satisfying all of the assumptions,  and define $\M$ as the derived set of observed data $\mO=(L,A,X,\Delta)$ distributions.

\subsubsection{Main Identification Result}
\label{sec:estimand:identif:results}

Define the usual at-risk process $Y(u) = I\{X \geq u\}$. 
Define $H(t; a, l)=\prodi_{u\in (0,t]} \{ 1 - \mathrm{d} \Lambda_T(u; a, l) \}$ for 
\begin{align*}
  \Lambda_T(t; a, l) = \int_{(0, t]} \frac{\d \E \{ N_T(u) \mid A=a, L=l\}}{\E \{ Y(u) \mid A=a, L=l\}},
\end{align*} 
where $\prodi$ is the product integral \citep{gill1990survey}. 
Similarly define $K(t; a, l) = \prodi_{u\in (0,t]} \{ 1 - \mathrm{d} \Lambda_C(u; a, l) \}$ for 
\begin{align*}
    \Lambda_C(t; a, l) 
    = \int_{(0, t]} \frac{\d \E \{ N_C(u) \mid A=a, L=l \}}{\E\{ Y^{\dagger}(u) \mid A=a, L=l\}},
\end{align*} 
where $Y^{\dagger}(u) = I\{X > u, \Delta=1 \text{ or } X \geq u, \Delta=0\}.$
In these expressions, $\E$ is taken under the observed data distribution
$\PP.$ The following lemma shows that $K$ identifies $K^{\full}$ and likewise that $H$ identifies $H^{\full}.$ 
\begin{lem}
\label{lem:surv-ident}
 Let $\Pf$ be any full and coarsened data distribution with corresponding observed data distribution $\P \in \mathcal{M}$. 
 Then $K^{\full}(u; a, l) = K(u; a, l)$ and $H^{\full}(u; a, l) = H(u; a, l)$ for all $u,a,l$.
\end{lem}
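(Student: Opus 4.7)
The plan is to verify each identification separately by working at the level of cumulative hazards and then invoking the product integral representation. Since $H^{\full}(\cdot; a, l)$ is the product integral of its own cumulative hazard and $H(\cdot; a, l) = \prodi_{u \in (0,t]}\{1 - d\Lambda_T(u; a, l)\}$ by construction, it suffices to show that $\Lambda_T(\cdot; a, l)$ coincides with the full-data cumulative hazard $\Lambda_T^{\full}(\cdot; a, l)$ associated with $H^{\full}(\cdot; a, l)$, and symmetrically for $K$ and $K^{\full}$. Uniqueness of the product integral representation (Gill and Johansen) then closes the argument.

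First I would compute the numerator and denominator appearing in $\Lambda_T$ in terms of the full data. Using \cref{asp:phi}, which gives $X = \min(T^{\full}_A, C^{\full}_A)$ and $\Delta = I\{T^{\full}_A \leq C^{\full}_A\}$, one obtains
\begin{align*}
 \E\{N_T(u) \mid A = a, L = l\} &= \Pf\{T^{\full}_a \leq u,\, T^{\full}_a \leq C^{\full}_a \mid A = a, L = l\}, \\
 \E\{Y(u) \mid A = a, L = l\} &= \Pf\{T^{\full}_a \geq u,\, C^{\full}_a \geq u \mid A = a, L = l\}.
\end{align*}
The sequential CAR condition $T^{\full}_a \ind C^{\full}_a \mid A = a, L$ on $C^{\full}_a < T^{\full}_a$ (\cref{asp:car}, via \cref{thm:car_caus_surv}) then allows these joint probabilities to be rewritten so that the ratio defining $d\Lambda_T(u; a, l)$ collapses to $\Pf\{T^{\full}_a \in du \mid A = a, L = l\} / \Pf\{T^{\full}_a \geq u \mid A = a, L = l\}$, which is exactly $-dH^{\full}(u; a, l) / H^{\full}(u-; a, l)$. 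This gives $\Lambda_T = \Lambda_T^{\full}$ and hence $H = H^{\full}$.

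The argument for $K$ is parallel but requires identifying the correct at-risk set. A short case analysis shows $Y^{\dagger}(u) = 1$ if and only if $T^{\full}_a > u$ and $C^{\full}_a \geq u$, so
\begin{align*}
 \E\{N_C(u) \mid A = a, L = l\} &= \Pf\{C^{\full}_a \leq u,\, C^{\full}_a < T^{\full}_a \mid A = a, L = l\}, \\
 \E\{Y^{\dagger}(u) \mid A = a, L = l\} &= \Pf\{T^{\full}_a > u,\, C^{\full}_a \geq u \mid A = a, L = l\}.
\end{align*}
Sequential CAR restricted to $\{C^{\full}_a < T^{\full}_a\}$ factorizes the numerator, and an analogous decomposition of the denominator (splitting on whether $T^{\full}_a \leq C^{\full}_a$ or $T^{\full}_a > C^{\full}_a$) yields $d\Lambda_C(u; a, l) = -dK^{\full}(u; a, l)/K^{\full}(u-; a, l)$. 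Applying the product integral then gives $K = K^{\full}$.

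The main obstacle is making these factorizations rigorous without absolute continuity: one must handle point masses and correctly distinguish $>$ from $\geq$ throughout, and interpret the statement ``$\ind$ on $C^{\full}_a < T^{\full}_a$'' strongly enough to cover all needed joint events (not merely those lying entirely inside $\{C^{\full}_a < T^{\full}_a\}$). This is precisely where the parallel treatment in the companion paper \citet{baer2023causalinferenceexpectednumber} can be invoked: their identification lemma for the terminal event hazard in the recurrent-event setting specializes, under \cref{asp:finitevar}--\cref{asp:positivity}, to exactly the statement here, so the detailed measure-theoretic bookkeeping can be imported rather than repeated.
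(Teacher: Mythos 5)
Your proposal is correct and follows essentially the same route as the paper: identify the observed-data cumulative hazards with the full-data ones by expressing $\E\{N_C(u)\mid a,l\}$, $\E\{Y^{\dagger}(u)\mid a,l\}$ (and their failure-time analogues) as joint probabilities of $(T^{\full}_a, C^{\full}_a)$, factor these using the sequential CAR independence ``on $C^{\full}_a < T^{\full}_a$,'' and conclude via the product-integral representation of \citet{gill1990survey}. The only difference is one of completeness, not of method: the paper writes out the two key factorization identities for the censoring hazard explicitly (and declares the failure-time case analogous), whereas you defer that bookkeeping to the companion paper.
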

This lemma, proved in the Appendix, only assumes sequential CAR and does not impose the usual assumption of full conditional independence between the potential censoring and failure times. Additionally, it relies on a carefully constructed at-risk process $Y^{\dagger}$ for censoring that avoids the necessity of imposing absolute continuity assumptions on either failure or censoring. As such, and in contrast to much of the available literature,
we allow for the possibility that ties can occur between failure and censoring times.

\begin{remark}
    The process $Y^{\dagger}(\cdot)$ is a modified version of the usual at-risk process $Y(\cdot).$
Although $Y^{\dagger}(\cdot)$ may seem unfamiliar, it is in fact appropriate to use when events for which $\Delta=1$ have priority over events for which $\Delta=0$; see \citet[Page 56]{gill1994lectures} and \citet{strawbaer24} for 
related discussion. For example, this is the case for the observed data since $\Delta_A = I\{T^{\full}_A \leq C^{\full}_A\}$ is defined with a ``$\leq$'' rather than a ``$<$''; that is, any tie between $T^{\full}_A$ and $C^{\full}_A$ results in 
the observation of $T_A,$ a failure. We note that 
$Y^{\dagger}(u) = Y(u) - \{ N_T(u)-N_T(u-) \}$ for every $u>0;$ these
are almost surely equal when there are no shared discontinuities
between the underlying failure and censoring distributions
(e.g., at least one distribution is continuous). 
\end{remark}

The full data estimand is identified in $\M$, as the following proposition shows. This makes it possible 
to estimate the full data estimand using the observed data and  inverse probability weighting; a proof
is given in the Appendix.
\begin{thm}
\label{prop:identif}
    Let $\Pf$ be any full and coarsened data distribution with corresponding observed data distribution $\P \in \mathcal{M}$. For each time $t>0$, the full data estimand component $\eta_a^{\full}(t) = \eta_a(t)$, where $\eta_a(t) = \E \{ \varphi_{\eta, a} (t; \mO; \P) \}$ for
    \begin{equation*}
        \varphi_{\eta, a} (t; \mO; \P)
        = \frac{I\{A=a\}}{\pi(A; L)} \frac{\Delta}{K(X-; A, L)} I\{X > t\}.
    \end{equation*}
\end{thm}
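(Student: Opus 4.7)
The plan is to establish the identification formula by a standard sequence of conditional expectation manipulations, leveraging \cref{lem:surv-ident} to replace the observed nuisance $K$ with its full-data counterpart $K^{\full}$ and then invoking each condition of sequential CAR from \cref{thm:car_caus_surv} to peel off the treatment and censoring variables in turn. First I would apply \cref{lem:surv-ident} pointwise to write $K(X-; A, L) = K^{\full}(X-; A, L)$, and use \cref{asp:phi} to express $\Delta I\{X>t\}$ on the event $\{A=a\}$ as $I\{T^{\full}_a \leq C^{\full}_a,\ T^{\full}_a > t\}$. This rewrites the object of interest, computed under $\PPf$, as
\[
\Ef\!\left\{ \frac{I\{A=a\}}{\pi(a; L)} \frac{I\{T^{\full}_a \leq C^{\full}_a\}\, I\{T^{\full}_a > t\}}{K^{\full}(T^{\full}_a-;\,a, L)} \right\}.
\]

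Next I would unwind this expectation by iterated conditioning, from the outside in. Conditioning first on $L$ and using the elementary identity $\Ef\{I\{A=a\} g(L, T^{\full}_a, C^{\full}_a) \mid L\} = \pi(a; L)\, \Ef\{g(L, T^{\full}_a, C^{\full}_a) \mid A=a, L\}$ (which is just the definition of the conditional expectation combined with the first sequential CAR condition ensuring that $g$ depends on $a$-indexed potentials), the propensity factor cancels and the display reduces to
\[
\Ef\!\left\{ \Ef\!\left[\frac{I\{T^{\full}_a \leq C^{\full}_a\}\, I\{T^{\full}_a > t\}}{K^{\full}(T^{\full}_a-;\,a, L)} \,\bigg|\, A=a, L \right] \right\}.
\]
I would then condition further on $T^{\full}_a$ inside the brackets and invoke the second sequential CAR condition $T^{\full}_a \ind C^{\full}_a \mid A=a, L$ on $\{C^{\full}_a < T^{\full}_a\}$: interpreted as the statement that the sub-distribution of $C^{\full}_a$ on $\{C^{\full}_a < T^{\full}_a\}$ does not depend on $T^{\full}_a$, it yields the inverse-probability-of-censoring identity $\Pf\{C^{\full}_a \geq s \mid A=a, L, T^{\full}_a = s\} = \Pf\{C^{\full}_a \geq s \mid A=a, L\} = K^{\full}(s-;\,a, L)$, where the final equality follows from the identification in \cref{lem:surv-ident} applied on both sides. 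This exactly cancels the $K^{\full}$ in the denominator, leaving $\Ef\{I\{T^{\full}_a > t\} \mid A=a, L\}$; the first sequential CAR condition $(T^{\full}_0, T^{\full}_1) \ind A \mid L$ then removes the $A$-conditioning to give $\Pf\{T^{\full}_a > t \mid L\}$, and the outer expectation over $L$ returns $\Pf\{T^{\full}_a > t\} = \eta^{\full}_a(t)$.

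I expect the main obstacle to be the inverse-probability-of-censoring identity in the second step, precisely because the ``on $\{C^{\full}_a < T^{\full}_a\}$'' qualifier in the sequential CAR statement is not an ordinary conditional independence and must be translated into the correct distributional statement before the cancellation with $K^{\full}(T^{\full}_a-;\,a, L)$ goes through. Once this is handled, the remaining manipulations are purely mechanical, with \cref{asp:positivity} ensuring that the reciprocal $1/K^{\full}(T^{\full}_a-;\,a, L)$ is well defined and bounded on $[0,\tau]$ by \cref{asp:trunc}, and \cref{asp:finitevar} guaranteeing that all interchanges of conditional expectations are legitimate.
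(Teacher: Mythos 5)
Your proposal is correct and follows essentially the same route as the paper's proof: rewrite the observed-data expectation under $\Pf$ via \cref{asp:phi} and \cref{lem:surv-ident}, condition on $(A=a,L,T^{\full}_a)$, convert $\Pf\{C^{\full}_a\geq T^{\full}_a\mid\cdot\}$ to $1-\Pf\{C^{\full}_a<T^{\full}_a\mid\cdot\}$ so that the ``on $C^{\full}_a<T^{\full}_a$'' qualifier of sequential CAR applies and cancels $K^{\full}(T^{\full}_a-;a,L)$, then use $(T^{\full}_0,T^{\full}_1)\ind A\mid L$ to cancel the propensity. The only cosmetic quibble is that the identity $\Ef\{I\{A=a\}g\mid L\}=\pi(a;L)\Ef\{g\mid A=a,L\}$ is the tower property alone and needs no CAR condition, and the final equality $\Pf\{C^{\full}_a\geq s\mid A=a,L\}=K^{\full}(s-;a,L)$ is definitional rather than an appeal to \cref{lem:surv-ident}.
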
   
\noindent We note that $\eta_a(t)$ denotes the observed data identification of $\eta^*_a(t)$.

\subsection{The class of influence functions}

\citet[][Thm.\ 3]{BaerStrawErt25} derived the class of influence functions 
for the counterfactual mean of a recurrent event process when observation is terminated by death 
(i.e., failure, or a terminal event). In the absence of a recurrent event process, the
identification conditions under which this class of influence functions is derived
correspond to the sequential CAR conditions summarized in Theorem \ref{thm:car_caus_surv}. It follows
that the class of influence functions for  $\eta^*_a(t),$ assuming that the conditional coarsening 
distributions are known, can be obtained as a special case; the relevant result is summarized below.

\begin{thm}
\label{eta-IF-class} Under Assumptions 1-5,
 the class of influence functions for $\eta_a(t) = \eta^*_a(t)$ is given by
    \begin{align*}
       \frac{I\{A=a\}}{\pi(a; L)} & \frac{\Delta}{K(X-; a, L)} I\{X > t\} - \eta_a(t)
      - \{I\{A=a\} - \pi(a; L)\} h_1(L)  \\
      & + \int_{(0,\infty)} h_2(u; A, L) \,\d M_{C}(u; A, L),
    \end{align*}
    where  $h_1$ and $h_2$ are arbitrary index functions and
    $M_{C}(u;A,L) = N_C(u) - \int_{(0,u]} Y^\dag(s) d \Lambda_{C}(s; A,L).$
\end{thm}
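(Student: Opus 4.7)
The plan is to derive \cref{eta-IF-class} as a specialization of Theorem~4 of \cite{baer2023causalinferenceexpectednumber}, which characterizes the class of influence functions for the counterfactual mean of a recurrent event process observed subject to a terminal event. First I would argue that the causal survival functional $\eta_a^{\full}(t) = \Pf\{T^{\full}_a > t\}$ is obtained from that more general setup by taking the recurrent event process to be identically zero and the terminal-event functional to be $I\{T^{\full}_a > t\}$; under this reduction, the sequential CAR conditions used there collapse to those of \cref{thm:car_caus_surv}. Substituting into the general IF expression and collecting terms produces the displayed form, with the IPW baseline matching $\varphi_{\eta,a}(t;\mO;\P) - \eta_a(t)$ supplied by \cref{prop:identif}.

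To make the argument self-contained, I would also outline the structural derivation. Under sequential CAR the observed data likelihood factors as a product of factors for $L$, for $A \mid L$, for the full-data conditional law of $T^{\full}_A \mid A, L$, and for the coarsening law of $C^{\full}_A \mid A, L$; the tangent space decomposes into four mutually orthogonal components accordingly. For a binary treatment, the score space for the $A \mid L$ factor is exactly $\{I(A=a) - \pi(a;L)\}\, h_1(L)$ for arbitrary square-integrable $h_1$, and under sequential CAR the score space for the censoring factor is exactly the space of stochastic integrals $\int h_2(u;A,L)\,\d M_C(u;A,L)$ with $M_C$ defined through the modified at-risk process $Y^{\dagger}$. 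Because $\eta_a^{\full}(t)$ depends only on the full-data marginal of $T^{\full}_a$, perturbations of the two coarsening factors leave the estimand invariant, so adding arbitrary elements of these two score spaces to the IPW baseline yields further IFs; conversely, any IF differs from the baseline by such an element.

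The routine middle verifications are: the IPW baseline is unbiased for $\eta_a(t)$ by \cref{prop:identif}; the treatment augmentation has mean zero by conditioning on $L$; and the censoring augmentation has mean zero because $M_C(\cdot;A,L)$ is a mean-zero martingale under sequential CAR, a fact that follows from \cref{lem:surv-ident} together with the construction of $Y^{\dagger}$ that accommodates shared discontinuities between failure and censoring distributions.

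The main obstacle is the completeness direction: showing that the displayed expression exhausts the class of IFs rather than merely producing a subfamily. This rests on characterizing the censoring tangent space as martingale integrals against $M_C$ without imposing absolute continuity on either the failure or censoring distribution, a step for which the modified at-risk process $Y^{\dagger}$ and the product integral formalism developed in \cite{baer2023causalinferenceexpectednumber} are essential. This is the decisive reason to route the proof through that reference rather than attempt a direct derivation that would duplicate its technical scaffolding.
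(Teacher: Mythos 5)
Your proposal matches the paper's approach exactly: the paper offers no independent proof of \cref{eta-IF-class}, but obtains it precisely as the special case of Theorem~4 of \citet{baer2023causalinferenceexpectednumber} in which the recurrent event process is absent, noting that the identification conditions there reduce to the sequential CAR conditions of \cref{thm:car_caus_surv}. Your additional structural sketch (likelihood factorization, treatment and censoring score spaces, and the completeness direction via the $Y^{\dagger}$-based martingale characterization) is consistent with, and somewhat more explicit than, what the paper itself records.
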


\begin{remark}
As shown in \citet[Lemma 1]{BaerStrawErt25}, the class of influence functions
obtained when nominally allowing for  the presence of  treatment-specific counterfactual
censoring variables $C^\full_0 \neq C^{\full}_1$ is exactly equivalent to that derived under 
the assumption that there is a single coarsening variable for censoring, that is, $C^\full = C^\full_0 = C^{\full}_1.$
This is because the observed data  cannot distinguish between the coarsening variables
$C^\full$ and $C^\full_A = A C^\full_1 + (1-A) C^\full_0$. 
\end{remark}

The class given in Theorem \ref{eta-IF-class} is wider than the classes given in earlier work. For example, although not necessarily explicit, \citet{anstrom2001utilizing, bai2013doubly, bai2017optimal} each assume absolute continuity of the potential failure and censoring time
distributions and, for arbitrary $h_1$ and $h_2,$ report the class to have elements
\begin{align}
\nonumber
    \frac{I\{A=a\}}{\pi(a; L)} & \frac{\Delta}{K(X-; a, L)} I(X > t) - \eta_a(t) - \frac{I\{A=a\} - \pi(a; L)}{\pi(a; L)} h_1(a,L) \\
\label{eq:tsiatisclass}
    & + \frac{I\{A=a\}}{\pi(a; L)} \int_{(0,\infty)} h_2(u; a, L) \frac{\d \tilde M_C(u; a, L)}{K(u; a, L)}
\end{align}
for $\tilde M_{C}(u;A,L) = N_C(u) - \int_{(0,u]} Y(u) d \tilde \Lambda_{C}(u; A,L),$ and where
$\tilde \Lambda_{C}(u; A,L)$
is defined analogously to
$\Lambda_{C}(u; A,L),$ 
with $\E\{ Y(u) \}$
replacing $\E\{ Y^\dag(u) \}.$
Under continuity of failure, censoring or both, the at-risk process
$Y^\dag(u) = Y(u)$ and hence $M_{C}(u;A,L) = \tilde M_{C}(u;A,L).$
As will be seen in the next section, both 
the influence function \eqref{eq:tsiatisclass}
and that given in Theorem \ref{eta-IF-class} include the efficient influence function (i.e., when absolute continuity holds); however, the class \eqref{eq:tsiatisclass} is considerably smaller. In particular, the augmentation term for censoring (i.e., the summand with $\tilde M_C$) vanishes when $A\neq a,$ while this does not occur for the class summarized in Theorem \ref{eta-IF-class}.

A detailed derivation of the class in \eqref{eq:tsiatisclass} is not explicitly provided in the papers cited above. In the Appendix, and following the brief description in \citet[\S 2.1]{bai2013doubly}, we show how their class can be derived under a monotone missing data model rather than a causal survival model that jointly considers all counterfactuals. In particular, fixing $a,$ it will be seen that it is sufficient to consider $T_a$ as being observed when $A = a$ and
$\Delta = 1,$ and labeled as missing otherwise. In view of such a derivation, it should not come as a surprise that the class of influence functions in \eqref{eq:tsiatisclass} is smaller; the focus on estimating $\eta^*_a(t)$ for a given level of $a$ requires less data than that required under the indicated joint causal model considered in this paper.

\subsection{The form of the efficient influence function}
\label{sec:litreview:surv:eif}

To the authors' knowledge, the efficient influence function for $\eta_a(t)$ was first derived in \citet{hubbard2000nonparametric}; see also \citet{bai2013doubly}, who make use of developments in \citet{tsiatis2006semiparametric}. Both give the efficient influence function as a re-centered  augmented inverse probability weighted estimator, and each derive these results assuming absolute continuity of the latent failure and censoring distributions.  
Using results in \citet[][Thms.\ 3 \& 4]{BaerStrawErt25}, one may conjecture that the nonparametric efficient influence function 
within the class of influence functions described by Theorem \ref{eta-IF-class} is given by
   \begin{align}
\nonumber        \varphi_{\eta, a}(t, \mO; \P) - & \eta_a(t)
            - \frac{I\{A=a\} - \pi(a; L)}{\pi(a; L)} H(t; a, L)  \\
   \label{eq:EIF}
          &
            + \frac{I\{A=a\}}{\pi(a; L)} \int_{u\in(0,\infty)} \frac{H(t \vee u; a, L)}{H(u; a, L)} \frac{\d M_C(u; a, L)}{K(u; a, L)}.
    \end{align}
It is easy to show that \eqref{eq:EIF} reduces to the efficient influence function for $\eta_a(t)$ in both \citet{hubbard2000nonparametric} and \citet{bai2013doubly} when the indicated continuity conditions hold on the latent failure and censoring time 
distributions. However, this does not necessarily prove that \eqref{eq:EIF} is the nonparametric efficient influence
function under general conditions on these latent distributions.

\citet{stitelman2010collaborative, moore2009increasing, zhao2015doubly, diaz2019improved, westling2021inference, rytgaard2021estimation} 
each obtain an alternative form of the efficient influence function for $\eta_a(t)$, given below as
\begin{equation}
\label{eq:surv-gcomp-eif}
    H(t; a, L) 
    - \eta_{a}(t)
    - \frac{I\{A=a\}}{\pi(a; L)} \int_{(0,t]} \frac{H(t; a, L)}{H(u; a, L)} 
    \frac{\d M_T(u; a, L) }{K(u-; a, L)},
\end{equation}
where $M_T(u; a,L) = N_T(u) - \int_0^u Y(s) \,\d \Lambda_{T}(s; a, L)$
is the usual failure time martingale process (i.e., with respect
to the usual observed data filtration generated by 
$N_T,$ $Y,$ $A$ and $L$). The following result, proved in the Appendix, shows that 
\eqref{eq:EIF} and \eqref{eq:surv-gcomp-eif}, are pointwise equal, and hence
that \eqref{eq:EIF} is the nonparametric efficient influence function.

\begin{thm}
\label{thm: EIF equal}
The nonparametric efficient influence
functions \eqref{eq:EIF}
and \eqref{eq:surv-gcomp-eif}, are pointwise equal
in $t$ for every $t \leq \tau.$
\end{thm}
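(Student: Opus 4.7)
The plan is to reduce the claimed pointwise equality to a deterministic identity in the observation $(X,\Delta,L)$ and then verify it via case analysis combined with a Duhamel-type product-integration lemma. First, I subtract \eqref{eq:surv-gcomp-eif} from \eqref{eq:EIF}: the $\eta_a(t)$ terms cancel, and the two coefficients on $H(t;a,L)$ combine as $-\frac{I\{A=a\}-\pi(a;L)}{\pi(a;L)} - 1 = -\frac{I\{A=a\}}{\pi(a;L)}$. On $\{A\neq a\}$ every remaining term vanishes. On $\{A=a\}$, multiplying through by $\pi(a;L)$ reduces the claim to the deterministic identity
\begin{equation*}
\frac{\Delta\,I\{X>t\}}{K(X-;a,L)} = H(t;a,L) - \int_{(0,\infty)}\frac{H(t\vee u;a,L)}{H(u;a,L)}\,\frac{\d M_C(u;a,L)}{K(u;a,L)} - \int_{(0,t]}\frac{H(t;a,L)}{H(u;a,L)}\,\frac{\d M_T(u;a,L)}{K(u-;a,L)}.
\end{equation*}

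Next I would substitute the Doob--Meyer decompositions $\d M_T = \d N_T - Y\,\d\Lambda_T$ and $\d M_C = \d N_C - Y^{\dagger}\,\d\Lambda_C$. On $\{A=a\}$, $N_T$ and $N_C$ each have exactly one jump at $u=X$, occurring respectively when $\Delta=1$ or $\Delta=0$, so the ``data parts'' of the two integrals reduce to single explicit terms involving $H$ and $K$ at $X$. The ``compensator parts'' remain Stieltjes integrals against $\d\Lambda_T$ and $\d\Lambda_C$, whose supports are determined by $Y(u)=I\{X\geq u\}$ and by $Y^{\dagger}(u) = I\{X>u,\Delta=1\}+I\{X\geq u,\Delta=0\}$. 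I would then case-split on the four possibilities for $(\Delta,I\{X\leq t\})$. Within each case the operator $t\vee u$ trivialises over the relevant integration range, and after cancelling a common factor of $H(t;a,L)$ the identity collapses to a statement about $1/(HK)$ alone.

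The key deterministic lemma is the Volterra/Duhamel identity
\begin{equation*}
\frac{1}{H(s;a,L)K(s-;a,L)} = 1 + \int_{(0,s]}\frac{\d\Lambda_T(u;a,L)}{H(u;a,L)K(u-;a,L)} + \int_{(0,s)}\frac{\d\Lambda_C(u;a,L)}{H(u;a,L)K(u;a,L)},
\end{equation*}
together with its companion in which $K(s)$ replaces $K(s-)$ and the second integral is closed at $s$, and the auxiliary identity $1/K(t) = 1/K(X) - \int_{(t,X]}\d\Lambda_C(u;a,L)/K(u;a,L)$ needed when $\Delta=0$ and $X>t$. Each of these follows by observing that both sides agree at $s=0$ and have matching jumps at atoms of $\Lambda_T$ and $\Lambda_C$, using $H(u)=H(u-)\{1-\Delta\Lambda_T(u)\}$ and $K(u)=K(u-)\{1-\Delta\Lambda_C(u)\}$. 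The main obstacle is the careful bookkeeping of left versus right limits without any continuity assumption: the asymmetric appearance of $K(X-)$, $K(u-)$, and $K(u)$ across the three terms, combined with the distinction between $Y$ and $Y^{\dagger}$, is precisely what makes the identity balance at simultaneous jumps of $\Lambda_T$ and $\Lambda_C$. Once the Duhamel lemma is formulated with the correct left/right conventions, the case-by-case verification is routine algebra.
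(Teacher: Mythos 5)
Your proposal is correct and follows essentially the same route as the paper: both reduce the claim to a deterministic pathwise identity in $(X,\Delta,L)$ and verify it via the integration-by-parts (Duhamel) identity for $1/(HK)$ together with the IPCW identity of \cref{lem:rr-lems}, with the left/right-limit and $Y$ versus $Y^{\dagger}$ bookkeeping you flag as the crux. The paper merely organizes the computation as a forward simplification of \eqref{eq:surv-gcomp-eif} using $X\wedge t$ and $t\vee u$ indicator manipulations rather than your explicit four-way case split.
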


We highlight here that the usual at-risk process $Y$ appears in the failure martingale in the integrator of the augmentation term 
in \eqref{eq:surv-gcomp-eif}. Because this is equal to the efficient influence function \eqref{eq:EIF}, and this latter influence function involves a stochastic process for censoring $M_C$ defined using $Y^{\dagger}$, the resulting equivalence reinforces the fact that $Y^{\dagger}$ is the appropriate at-risk process for use in defining $M_C$ with general failure and censoring distributions.

However, even more can be said. First,
under continuity of the censoring mechanism, it is well
known that the stochastic process
$M_C(\cdot;a,L)$ is a martingale with respect to the same 
observed data filtration
as $M_T(\cdot; a,L).$ Consequently, in this case, the last term appearing on the right-hand side of \eqref{eq:EIF} is a mean zero martingale. Second, and more interesting, is that
$M_C(\cdot;a,L)$ is a mean zero martingale under the observed
data filtration in the absence of any continuity assumptions
on failure or censoring. Therefore,
in the presence of possible ties between failure and censoring,
the last term appearing on the right-hand
side of \eqref{eq:EIF} remains a mean-zero martingale
under the indicated observed data filtration. These results respectively follow from 
results in  \citet{baerstraw25}, and may be considered surprising in part
because $Y^\dag(u) = Y(u) - \{ N_T(u) - N_T(u-)\}$ 
is not a predictable process in the setting where ties can occur
between failure and censoring times.

\section{Estimation in Practice}
\label{sec:litreview:surv:other-ests}

A variety of approaches for estimation of the
counterfactual survival function $\eta^*_a(t) = \eta_a(t)$
have been proposed
to date. Earlier works, such as
\citet{hubbard2000nonparametric,  anstrom2001utilizing, bai2013doubly, bai2017optimal}, do not make use of machine learning or other 
fully nonparametric procedures to estimate unknown nuisance
parameters (i.e., the treatment propensity $\pi(a,L),$
survivor function $S(\cdot; a, L)$ or related functionals,
and/or the censoring survival function
$K(\cdot; a, L)$). The efficient
influence functions reported in these works
coincide with \eqref{eq:EIF} when all modeling
assumptions are correct, and typically lead to doubly robust
estimators otherwise. Several such estimators for $\eta_a(t)$ are described and compared by \citet{denz2022comparison}; these include estimators obtained under IPTW and AIPTW, G-computation, propensity score matching, empirical likelihoood, and 
those derived from pseudovalues. In general, similar
bias performance across such methods were observed when
the assumptions were met for consistency, with better
performance observed for methods making use
of ``outcome regression'' models (e.g., AIPTW).

In more recent work, including \citet{cai2020one, rytgaard2021estimation,westling2021inference, Wolock}, 
machine learning methods are leveraged to estimate all required nonparametric nuisance parameters,
leading to nonparametric efficient $\sqrt{n}-$asymptotic inference for the counterfactual survivor function $\eta^*_a(t)$
and relevant contrasts, such as $\eta^*_1(t)- \eta^*_0(t),$ when each required nuisance parameter is estimated
consistently at sufficiently fast rates (i.e., $o_P(n^{-1/4})$). Compared to others noted above, the methods in 
\citet{Wolock} have the advantage of being structured to allow for the use of the many different learning algorithms 
that have been developed for binary classification problems, and are also unique in 
the causal survival literature by allowing for left truncation in addition to right censoring. In particular, although
estimation in each case is ultimately  based on \eqref{eq:surv-gcomp-eif},
\citet{Wolock} demonstrates that their global stacking procedure provides some improvement over the
superLearner implementation described in \citet{westling2021inference}. 
Results summarized in \citet{BaerStrawErt25} show that estimation based on \eqref{eq:EIF},  
implemented using superLearner,  is in practice essentially indistinguishable from the 
approach described in \citet{westling2021inference} as long as the sample size is sufficiently large.

\section{Discussion}
\label{sec:conc}

The field of causal survival analysis continues to expand at a
rapid pace. The primary goal of the present work is to clarify the appropriate characterization of coarsening at random, and sequential CAR, in this setting. We have carefully compared these conditions to other identification assumptions that have previously been used in the literature, many of which have been erroneously suggested as implying CAR. We have also demonstrated how different classes
of influence functions have been derived, and established
how the different classes (and most efficient choices) are
related to each other. 

An open question of some interest is whether there are settings of practical importance for which imposing conditions that violate both CAR and sequential CAR
lead to consequential differences in estimators. For example, it is well known that characterizing CAR in multivariate right-censored data is challenging \citep{van1996efficient}. In such cases, researchers often impose the sequential randomization assumption (SRA), which is stronger than CAR. This implies that the model space under SRA is strictly smaller than the one under CAR and thus cannot be nonparametric; as a result, estimators derived under SRA are not nonparametrically efficient \citep[Section 5]{van2003unified}.

A more nuanced characterization of CAR in the presence of time-varying covariates and right censoring would also be valuable—particularly in settings where time-varying covariates, such as recurrent adverse events, are collected longitudinally. One compelling application would be the identification of mediated effects in such longitudinal survival data structures.

\section*{Acknowledgements}

Mark van der Laan provided helpful input on some of the content in \cref{sec:litreview:surv:car}. 
Butch Tsiatis provided helpful input on some of the content in \cref{sec:litreview:surv:class}. 
BRB thanks David Oakes for teaching him that, informally, censoring occurs later than failure. 
Parts of this work were supported by the National Institute of Neurological 
Disorders and Stroke (AE, BRB: R61/R33 NS120240) and the
National Institute of Environmental Health Sciences
(AE, RLS: R01 ES034021).

\bibliographystyle{apalike}
\bibliography{references} 

\begin{thebibliography}{}

\bibitem[Andrews and Didelez, 2020]{andrews2020insights}
Andrews, R.~M. and Didelez, V. (2020).
\newblock Insights into the cross-world independence assumption of causal
  mediation analysis.
\newblock {\em Epidemiology}, 32(2):209--219.

\bibitem[Anstrom and Tsiatis, 2001]{anstrom2001utilizing}
Anstrom, K.~J. and Tsiatis, A.~A. (2001).
\newblock Utilizing propensity scores to estimate causal treatment effects with
  censored time-lagged data.
\newblock {\em Biometrics}, 57(4):1207--1218.

\bibitem[Baer et~al., 2025]{BaerStrawErt25}
Baer, B.~R., Bui, T., Mork, D., Strawderman, R.~L., and Ertefaie, A. (2025).
\newblock Causal inference for the expected number of recurrent events in the
  presence of a terminal event.
\newblock {\em arXiv preprint arXiv:XX.XX}.

\bibitem[Baer and Strawderman, 2025]{baerstraw25}
Baer, B.~R. and Strawderman, R.~L. (2025).
\newblock On a theory of martingales for censoring.
\newblock {\em arXiv preprint arXiv:2403.02840}.

\bibitem[Bai et~al., 2017]{bai2017optimal}
Bai, X., Tsiatis, A.~A., Lu, W., and Song, R. (2017).
\newblock Optimal treatment regimes for survival endpoints using a
  locally-efficient doubly-robust estimator from a classification perspective.
\newblock {\em Lifetime Data Analysis}, 23(4):585--604.

\bibitem[Bai et~al., 2013]{bai2013doubly}
Bai, X., Tsiatis, A.~A., and O'Brien, S.~M. (2013).
\newblock Doubly-robust estimators of treatment-specific survival distributions
  in observational studies with stratified sampling.
\newblock {\em Biometrics}, 69(4):830--839.

\bibitem[Bickel et~al., 1993]{bickel1993efficient}
Bickel, P.~J., Klaassen, C.~A., Ritov, Y., and Wellner, J.~A. (1993).
\newblock {\em Efficient and Adaptive Eestimation for Semiparametric Models}.
\newblock Spinger, New York.

\bibitem[Cai and van~der Laan, 2020]{cai2020one}
Cai, W. and van~der Laan, M.~J. (2020).
\newblock One-step targeted maximum likelihood estimation for time-to-event
  outcomes.
\newblock {\em Biometrics}, 76(3):722--733.

\bibitem[Cator, 2004]{cator2004testability}
Cator, E.~A. (2004).
\newblock On the testability of the {CAR} assumption.
\newblock {\em Annals of Statistics}, 32(5):1957--1980.

\bibitem[Chen and Tsiatis, 2001]{chen2001causal}
Chen, P.-Y. and Tsiatis, A.~A. (2001).
\newblock Causal inference on the difference of the restricted mean lifetime
  between two groups.
\newblock {\em Biometrics}, 57(4):1030--1038.

\bibitem[Denz et~al., 2023]{denz2022comparison}
Denz, R., Klaa{\ss}en-Mielke, R., and Timmesfeld, N. (2023).
\newblock A comparison of different methods to adjust survival curves for
  confounders.
\newblock {\em Statistics in Medicine}, 42(10):1461--1479.

\bibitem[D{\'\i}az et~al., 2019]{diaz2019improved}
D{\'\i}az, I., Colantuoni, E., Hanley, D.~F., and Rosenblum, M. (2019).
\newblock Improved precision in the analysis of randomized trials with survival
  outcomes, without assuming proportional hazards.
\newblock {\em Lifetime Data Analysis}, 25(3):439--468.

\bibitem[Fleming and Harrington, 1991]{fleming1991counting}
Fleming, T.~R. and Harrington, D.~P. (1991).
\newblock {\em Counting Processes and Survival Analysis}.
\newblock John Wiley \& Sons, New York.

\bibitem[Gill, 1983]{gill1983large}
Gill, R. (1983).
\newblock Large sample behaviour of the product-limit estimator on the whole
  line.
\newblock {\em The Annals of Statistics}, 11(1):49--58.

\bibitem[Gill, 1994]{gill1994lectures}
Gill, R.~D. (1994).
\newblock Lectures on survival analysis.
\newblock In Bernard, P., editor, {\em Lectures on Probability Theory}, pages
  115--241. Springer Verlag, Berlin Heidelberg.

\bibitem[Gill and Johansen, 1990]{gill1990survey}
Gill, R.~D. and Johansen, S. (1990).
\newblock A survey of product-integration with a view toward application in
  survival analysis.
\newblock {\em The Annals of Statistics}, 18(4):1501--1555.

\bibitem[Gill and Robins, 1997]{gill1997sequential}
Gill, R.~D. and Robins, J.~M. (1997).
\newblock Sequential models for coarsening and missingness.
\newblock In {\em Proceedings of the First Seattle Symposium in Biostatistics:
  Survival analysis}, pages 295--305, New York. Springer.

\bibitem[Hernán and Robins, 2023]{hernan2023what}
Hernán, M.~A. and Robins, J.~M. (2023).
\newblock {\em Causal Inference: What If}.
\newblock Chapman \& Hall/CRC, Boca Raton.

\bibitem[Hines et~al., 2022]{hines2021demystifying}
Hines, O., Dukes, O., Diaz-Ordaz, K., and Vansteelandt, S. (2022).
\newblock Demystifying statistical learning based on efficient influence
  functions.
\newblock {\em The American Statistician}, 76(3):292--304.

\bibitem[Hubbard et~al., 2000]{hubbard2000nonparametric}
Hubbard, A.~E., van~der Laan, M.~J., and Robins, J.~M. (2000).
\newblock Nonparametric locally efficient estimation of the treatment specific
  survival distribution with right censored data and covariates in
  observational studies.
\newblock In Halloran, M. and Berry, D., editors, {\em Statistical Models in
  Epidemiology, the Environment, and Clinical Trials}, pages 135--177, New
  York. Springer.

\bibitem[Kalbfleisch and Prentice, 2002]{kalbfleisch2002statistical}
Kalbfleisch, J.~D. and Prentice, R.~L. (2002).
\newblock {\em The Statistical Analysis of Failure Time Data}.
\newblock John Wiley \& Sons, New Jersey, 2 edition.

\bibitem[Kennedy, 2023]{kennedy2022semiparametric}
Kennedy, E.~H. (2023).
\newblock Semiparametric doubly robust targeted double machine learning: a
  review.
\newblock {\em arXiv preprint arXiv:2203.06469}.

\bibitem[Moore and van~der Laan, 2009]{moore2009increasing}
Moore, K.~L. and van~der Laan, M.~J. (2009).
\newblock Increasing power in randomized trials with right censored outcomes
  through covariate adjustment.
\newblock {\em Journal of Biopharmaceutical Statistics}, 19(6):1099--1131.

\bibitem[Robins and Rotnitzky, 1992]{robins1992recovery}
Robins, J.~M. and Rotnitzky, A. (1992).
\newblock Recovery of information and adjustment for dependent censoring using
  surrogate markers.
\newblock In Jewell, N., Dietz, K., and Farewell, V., editors, {\em AIDS
  Epidemiology-Methodological Issues}, pages 297--331. Birkh\"{a}user, Boston.

\bibitem[Rytgaard et~al., 2022]{rytgaard2022continuous}
Rytgaard, H.~C., Gerds, T.~A., and van~der Laan, M.~J. (2022).
\newblock Continuous-time targeted minimum loss-based estimation of
  intervention-specific mean outcomes.
\newblock {\em The Annals of Statistics}, 50(5):2469--2491.

\bibitem[Rytgaard et~al., 2023]{rytgaard2021estimation}
Rytgaard, H. C.~W., Eriksson, F., and van~der Laan, M. (2023).
\newblock Estimation of time-specific intervention effects on continuously
  distributed time-to-event outcomes by targeted maximum likelihood estimation.
\newblock {\em Biometrics}, 79(4):3038--3049.

\bibitem[Stitelman and van~der Laan, 2010]{stitelman2010collaborative}
Stitelman, O.~M. and van~der Laan, M.~J. (2010).
\newblock Collaborative targeted maximum likelihood for time to event data.
\newblock {\em The International Journal of Biostatistics}, 6(1):1--44.

\bibitem[Strawderman, 2000]{strawderman2000estimating}
Strawderman, R.~L. (2000).
\newblock Estimating the mean of an increasing stochastic process at a censored
  stopping time.
\newblock {\em Journal of the American Statistical Association},
  95(452):1192--1208.

\bibitem[Strawderman and Baer, 2024]{strawbaer24}
Strawderman, R.~L. and Baer, B.~R. (2024).
\newblock On the role of {V}olterra integral equations in self-consistent,
  product-limit, inverse probability of censoring weighted, and
  redistribution-to-the-right estimators for the survival function.
\newblock {\em Lifetime Data Analysis}, 30(3):1--18.

\bibitem[Tsiatis, 2006]{tsiatis2006semiparametric}
Tsiatis, A.~A. (2006).
\newblock {\em Semiparametric Theory and Missing Data}.
\newblock Springer, New York.

\bibitem[van~der Laan, 1996]{van1996efficient}
van~der Laan, M.~J. (1996).
\newblock Efficient estimation in the bivariate censoring model and repairing
  npmle.
\newblock {\em The Annals of Statistics}, 24(2):596--627.

\bibitem[van~der Laan et~al., 2000]{van2000locally}
van~der Laan, M.~J., Gill, R.~D., and Robins, J.~M. (2000).
\newblock Locally efficient estimation in censored data models, theory and
  examples.
\newblock {\em UC Berkeley Division of Biostatistics Working Paper Series},
  pages 1--65.

\bibitem[van~der Laan and Robins, 2003]{van2003unified}
van~der Laan, M.~J. and Robins, J.~M. (2003).
\newblock {\em Unified Methods for Censored Longitudinal Data and Causality}.
\newblock Springer, New York.

\bibitem[Westling et~al., 2024]{westling2021inference}
Westling, T., Luedtke, A., Gilbert, P., and Carone, M. (2024).
\newblock Inference for treatment-specific survival curves using machine
  learning.
\newblock {\em Journal of the American Statistical Association},
  119(546):1541--1553.

\bibitem[Wolock et~al., 2024]{Wolock}
Wolock, C.~J., Gilbert, P.~B., Simon, N., and and, M.~C. (2024).
\newblock A framework for leveraging machine learning tools to estimate
  personalized survival curves.
\newblock {\em Journal of Computational and Graphical Statistics},
  33(3):1098--1108.

\bibitem[Zhao et~al., 2015]{zhao2015doubly}
Zhao, Y.-Q., Zeng, D., Laber, E.~B., Song, R., Yuan, M., and Kosorok, M.~R.
  (2015).
\newblock Doubly robust learning for estimating individualized treatment with
  censored data.
\newblock {\em Biometrika}, 102(1):151--168.

\end{thebibliography}

\newpage
\begin{appendices}
\renewcommand*{\thesubsection}{A\arabic{section}.\arabic{subsection}}
\renewcommand{\theequation}{A\arabic{equation}}
\renewcommand{\thefigure}{A\arabic{figure}}
\renewcommand{\thetable}{A\arabic{table}}
\renewcommand{\bibnumfmt}[1]{[A#1]}
\renewcommand{\citenumfont}[1]{A#1}

\renewcommand*{\thesection}{A\arabic{section}}

\section{Two Fundamental Identities }
\label{supp:sec:fund-ident-proofs}

In this section, we state and prove two 
general identities, both of which have their 
roots in the earliest work in coarsening theory.
In the first subsection below, we 
review two important preceding
results; in the second subsection, 
we state and prove two
important generalizations.

\subsection{Review}
\label{sec:alg-ident:intro}

In \citet[Section 3h]{robins1992recovery}, four fundamental identities are given. Two of these (i.e., equations 3.10a and 3.10b)  are the result of simple integral calculus; the remaining identities have led to important insights and developments in modern survival analysis. However, before stating these latter two identities, we need to introduce some notation. Define the censoring hazard $\lambda_C(t) = \lim_{h \to 0+} \P (X < t + h, \Delta=0 \mid X \geq t) / h$ 
and the survival function $K(t) = \exp \left\{ - \int_{(0,t]} \lambda_C(u) \,\d u \right\}.$  Then, as shown in 
\citet{robins1992recovery}, we can write
\begin{align}
    \frac{I(X \geq t)}{K(t)} 
    & \stackrel{a.s.}{=} \frac{\Delta}{K(X)} I(X \geq t) + \int_{(t,\infty)} \frac{\d N_C(u) - I(X \geq u) \lambda_C(u) \, \d u}{K(u)}, \nonumber \\ 
    \frac{\Delta}{K(X)} 
    & \stackrel{a.s.}{=} 1 - \int_{(0,\infty)} \frac{\d N_C(u) - I(X \geq u) \lambda_C(u) \,\d u}{K(u)}, \label{eq:rr-rob}
\end{align}
where ``$\stackrel{a.s.}{=}$'' denotes almost sure equality. 
These identities hold when the time at-risk $X$ is absolutely continuous (with respect to the Lebesgue measure). Under an analogous absolute continuity assumption, \citet{strawderman2000estimating} highlighted that \eqref{eq:rr-rob} actually holds algebraically; that is, the identities rely principally on the relationship between $\lambda_C(\cdot)$ and $K(\cdot),$ and not on any particular relationship between $K(\cdot),$ $X,$ and/or  $\Delta$.

\subsection{The identities}
\label{sec:alg-ident:idents}

In this subsection, we state and prove general versions of \eqref{eq:rr-rob} that 
hold exactly (i.e., not just almost surely), while also relaxing the absolute 
continuity assumption that underpins the identities summarized
in \eqref{eq:rr-rob}.

Let $X \in (0, \tau]$ and $\Delta\in\{0,1\}$ be arbitrary numbers (or random variables) 
that need not be related in any way. 
Let $K$ be right-continuous, have locally bounded variation on any finite interval, and satisfy $K(0)=1$ and $K(u) \neq 0$ for all $u \in (0,\tau],$
where $\tau < \infty.$ Here, $K$ is an arbitrary function that is not necessarily the survival function of some potential censoring time. 
Although we denote domains of integration as extending to $\infty$, they formally stop at $\tau$ due to the assumptions. Now, we can define
\begin{equation*}
\label{eq:lamc-defn}
    \Lambda_C(t) = - \int_{(0, t]} \frac{\d K(u)}{K(u-)};
\end{equation*}
note this is the cumulative hazard for censoring when $K$ is a survivor function for censoring, and is right-continuous in general. Define the function  $N_C(t) = I\{X \leq t, \Delta = 0\}$ and corresponding
``centered''  version
\begin{equation}
    M_C(t)
    = N_C(t) - \int_{(0,t]} Y^{\dagger}(u) \,\d \Lambda_C(u), \label{eq:mc}
\end{equation}
where  $Y^{\dagger}(u) = I(X > u, \Delta=1 \text{ or } X \geq u, \Delta=0).$ Then, we can prove the following result.

\begin{lem}
\label{lem:rr-lems}
Let $t>0$. 
Under the general setup just defined, the following two identities hold:
\begin{align}
    \frac{I(X > t)}{K(t)} 
    & = \frac{\Delta}{K(X-)} I(X > t) + \int_{(t,\infty)} \frac{\d M_C(u)}{K(u)} \label{eq:rr1} \\
    \frac{\Delta}{K(X-)} 
    & = 1 - \int_{(0,\infty)} \frac{\d M_C(u)}{K(u)}. \label{eq:rr2}
\end{align}
\end{lem}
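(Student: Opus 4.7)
The plan is to reduce both displays to a single algebraic identity between $1/K$ and $\Lambda_C$, after which each of \eqref{eq:rr1} and \eqref{eq:rr2} follows from a short case split on the values of $\Delta$ and, for \eqref{eq:rr1}, of $I(X > t)$. No probability enters; everything is deterministic integration by parts against a càdlàg bounded variation function.

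First I would establish the key identity
\[
\frac{1}{K(t)} - 1 = \int_{(0,t]} \frac{\d \Lambda_C(u)}{K(u)}
\]
for every $t \in (0,\tau]$, by applying the integration-by-parts formula for càdlàg bounded variation functions to $K(t) \cdot (1/K(t)) \equiv 1$ and then invoking $\d K(u) = -K(u-)\,\d \Lambda_C(u)$, which is exactly how $\Lambda_C$ is defined. The analogue with $(0,t)$ in place of $(0,t]$ on the right and $K(t-)$ on the left follows immediately by taking a left limit.

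Next I would prove \eqref{eq:rr2} by splitting $\int_{(0,\infty)} \d M_C(u)/K(u) = \int \d N_C/K - \int Y^\dagger\,\d\Lambda_C/K$. Because $N_C$ has at most one jump, of size one at $u = X$ and only when $\Delta = 0$, the first integral equals $(1-\Delta)/K(X)$. For the second, reading the definition gives $Y^\dagger(u) = I(u < X)$ when $\Delta = 1$ and $Y^\dagger(u) = I(u \leq X)$ when $\Delta = 0$; the key identity from Step~1 then evaluates this integral as $1/K(X-) - 1$ or $1/K(X) - 1$, respectively. Assembling the two sub-cases yields $\int \d M_C/K = 1 - \Delta/K(X-)$ uniformly in $\Delta$, which rearranges to \eqref{eq:rr2}. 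Identity \eqref{eq:rr1} then follows by the same strategy applied to $\int_{(t,\infty)} \d M_C/K$, with a preliminary split according to whether $X \leq t$ (in which case both pieces vanish over $(t,\infty)$ and $I(X>t)/K(t) = 0$ on the left) or $X > t$ (which reduces, under each value of $\Delta$, to the Step~1 identity applied on an interval with lower endpoint $t$).

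The only thing requiring real care is the distinction between $K(X-)$ and $K(X)$ when $K$ has an atom at $X$; this is where the customary absolute continuity assumption is being relaxed. The process $Y^\dagger$ is designed precisely for this: by including $\{X\}$ in the integrator when $\Delta = 0$ and excluding it when $\Delta = 1$, it forces the endpoints of integration against $\d \Lambda_C$ to line up so that the reciprocals $1/K(X)$ and $1/K(X-)$ cancel in exactly the right combination on each side of each display. That endpoint book-keeping is essentially the sole substantive content of the proof once the Step~1 identity is in hand; I do not anticipate any deeper obstacle.
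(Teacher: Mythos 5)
Your proposal is correct and follows essentially the same route as the paper's proof: both rest on the identity $\d\{1/K(u)\} = \d\Lambda_C(u)/K(u)$ (the paper cites Fleming--Harrington Eq.~(2.9) where you use integration by parts on $K\cdot(1/K)\equiv 1$) together with the case-by-case reading of $Y^{\dagger}(u)$ as $I(u<X)$ when $\Delta=1$ and $I(u\leq X)$ when $\Delta=0$, which is exactly the endpoint bookkeeping you identify as the substantive content. The only cosmetic difference is ordering: you establish \eqref{eq:rr2} directly and then \eqref{eq:rr1}, whereas the paper proves \eqref{eq:rr1} first and obtains \eqref{eq:rr2} by letting $t\to 0+$.
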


\begin{proof}[Proof of Lemma \ref{lem:rr-lems}]
 We start by stating a few helpful expressions. First, 
 \begin{align*}
  \frac{1}{K(t)} - 1
  = - \int_{(0,t]} \frac{\d K(u)}{K(u-)K(u)}
  = \int_{(0,t]} \frac{\d \Lambda_C(u)}{K(u)},
 \end{align*}
 where the first equality follows from Equation~(2.9) in \citet{fleming1991counting} and the second equality follows from the definition of $\Lambda_C$. 
 Second, note that
 \begin{align*}
  \int_{(t,\infty)} \frac{Y^{\dagger}(u) \,\d \Lambda_C(u)}{K(u)}
  & = \Delta \int_{(t,X)} \frac{\d \Lambda_C(u)}{K(u)} + (1-\Delta) \int_{(t,X]} \frac{\d \Lambda_C(u)}{K(u)} \\
  & = \Delta \int_{(t,X)} \d K^{-1}(u) + (1-\Delta) \int_{(t,X]} \d K^{-1}(u) \\
  & = \Delta I(X>t) \Bigl\{ K^{-1}(X-) - K^{-1}(t) \Bigr\} - (1-\Delta) I(X>t) \Bigl\{ K^{-1}(X) - K^{-1}(t) \Bigr\} \\
  & = I(X>t) \left\{ \frac{\Delta}{K(X-)} + \frac{1-\Delta}{K(X)} - \frac{1}{K(t)} \right\}. 
 \end{align*}
 Now, the identity in \eqref{eq:rr1} follows immediately from
 \begin{align*}
  \int_{(t,\infty)} \frac{\d M_C(u)}{K(u)}
  & = \frac{(1-\Delta)I(X>t)}{K(X)} - \int_{(t,\infty)} \frac{Y^{\dagger}(u) \,\d \Lambda_C(u)}{K(u)} \\
  & = I(X>t) \left\{ \frac{1}{K(t)} - \frac{\Delta}{K(X-)} \right\},
 \end{align*}
and the identity in \eqref{eq:rr2} follows from the identity in \eqref{eq:rr1} by 
letting $t \to 0+$ from the right.
\end{proof}

\section{Coarsening At Random: Proof of Theorem \ref{thm:car_caus_surv}}

\begin{proof}

Recall that the full data is $(L,T^*_0, T^*_1)$ 
and the observed data is $(L, A, \Delta, X)$. The coarsening variable 
$(A, C^*_0, C^*_1)$ determines the observed data through the mapping
\begin{equation*}
    \Phi(L,T^*_0, T^*_1; A, C^*_0, C^*_1) = (L, A, \Delta=I(T^*_A \leq C^*_A), 
    X = T^*_A \wedge C^*_A). 
\end{equation*}

The following lemma plays a role in the remainder of the proof.
\begin{lem}
 \label{lem:X}
 The set of full data values that could produce an observed data value is
 \begin{equation*}
     \mathcal{X}(l, a, \delta, x)
     = \{(l', t_0, t_1) \, : \, 
     l'=l, 
     t_a = x \text{ if } \delta=1 \text{ and } t_a > x \text{ if } \delta=0
     \}.
 \end{equation*}
 Note that $t_{1-a}$ is arbitrary.
\end{lem}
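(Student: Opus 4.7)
The plan is to directly unpack the definition of the coarsening map $\Phi$ and characterize its pre-image, being careful that the claim concerns the \emph{full-data} slice of that pre-image after existentially quantifying over the coarsening variables. Fixing an observed value $(l, a, \delta, x)$, I would consider an arbitrary full-data point $(l', t_0, t_1)$ and ask when there exist coarsening values $(a', c_0, c_1)$ such that $\Phi(l', t_0, t_1; a', c_0, c_1) = (l, a, \delta, x)$. Matching the first two output coordinates of $\Phi$ forces $l' = l$ and $a' = a$, reducing the problem to the two scalar constraints $I(t_{a} \leq c_{a}) = \delta$ and $t_{a} \wedge c_{a} = x$ on the $a$-indexed pair $(t_a, c_a)$ alone. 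Crucially, once $a' = a$ is fixed, the $(1-a)$-indexed variables $t_{1-a}$ and $c_{1-a}$ never appear in $\Phi$, which is precisely why $t_{1-a}$ ends up unconstrained in $\mathcal{X}(l,a,\delta,x)$.

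Next, I would do a short case split on $\delta$. When $\delta = 1$, the two constraints read $t_a \leq c_a$ together with $t_a \wedge c_a = t_a = x$, which forces $t_a = x$ and is realizable by any $c_a \geq x$ (for instance $c_a = x$). When $\delta = 0$, they read $t_a > c_a$ together with $t_a \wedge c_a = c_a = x$, which forces $c_a = x$ and $t_a > x$. In both directions this gives exactly the two conditions in the definition of $\mathcal{X}(l,a,\delta,x)$: necessity follows from matching the coordinates of $\Phi$, and sufficiency follows by exhibiting the witnessing coarsening tuple $(a, c_0, c_1)$ with $c_a$ chosen as above and $c_{1-a}$ arbitrary.

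The result is essentially a bookkeeping exercise, and I do not anticipate a substantive obstacle. The only point requiring a bit of care is maintaining the distinction between the full data and the coarsening variables, since the lemma quantifies existentially over the latter: the apparent ``freedom'' of $t_{1-a}$ must be substantiated by supplying a compatible coarsening value $c_{1-a}$, while the apparent rigidity on $t_a$ must be shown to persist under every admissible choice of $c_a$. Both are immediate from the case analysis above.
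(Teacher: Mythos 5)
Your proof is correct and matches the paper's intent: the paper states this lemma without proof, treating it as immediate from the definition of $\Phi$, and your case split on $\delta$ (forcing $l'=l$, $a'=a$, then $t_a = x$ with witness $c_a \geq x$ when $\delta=1$, and $c_a = x$ with $t_a > x$ when $\delta=0$, with $t_{1-a}$ and $c_{1-a}$ never entering the map) is exactly the bookkeeping that justifies it.
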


Now we characterize coarsening at random, or CAR, defined as
\begin{align*}
     [(L,A,\Delta,X)=(l,a,\delta,x)& \mid (L,T^*_0, T^*_1)=(l', t_0, t_1)] =  \\
     &[(L,A,\Delta,X)=(l,a,\delta,x) \mid (L,T^*_0, T^*_1)=(l'', t_0', t_1')]
\end{align*}
for all $(l', t_0, t_1), (l'', t_0', t_1') \in \mathcal{X}(l,a,\delta,x),$
where $[\cdot]$ denotes densities and $(l,a,\delta,x)$ is an arbitrary observation \citep{van2003unified}. 
The condition roughly expresses that the observed data depends on the full data only 
through the part that is observed. Note that we may work with the density as both the 
Lebesgue and counting measures satisfy CAR themselves \citep[page 25]{van2003unified}.

 Throughout we use the notation $[\cdot]$ to denote a density and rely on the following decomposition:
  \begin{align}
    [L=l, A=a, & \Delta=\delta, X=x \mid L=l',  T^*_0=t_0, T^*_1=t_1] 
    \nonumber \\
    =\, & I(l=l') [A=a, \Delta=\delta, X=x \mid L=l, T^*_0=t_0, T^*_1=t_1] \nonumber \\
    =\, & I(l=l') [\Delta=\delta, X=x \mid A=a, L=l, T^*_0=t_0, T^*_1=t_1] \nonumber \\
    & \hspace{10mm} \times [A=a \mid L=l, T^*_0=t_0, T^*_1=t_1] \nonumber \\
    =\, & I(l=l') [\Delta^*_a=\delta, \delta T_a^* + (1-\delta) C_a^* = x \mid A=a, L=l, T^*_0=t_0, T^*_1=t_1] \nonumber \\
    & \hspace{10mm} \times [A=a \mid L=l, T^*_0=t_0, T^*_1=t_1], \label{eq:dens}.
 \end{align}
 
 We start by establishing the forward direction. Assume coarsening at random. 
 Our proof strategy to find necessary conditions is to show that certain conditional 
 densities do not depend on part of their conditioning set. 
 
 We first show that $A \ind (T_0^*, T_1^*) \mid L$. For any $l', t_0, t_1$, we calculate that
 \begin{align*}
     [A=a \mid L=l', T^*_a = t_a ] 
    =\, & \int [L=l, A=a, X=x, \Delta=\delta \mid L=l', T^*_a = t_a] \d\nu(x,\delta,l) \\
    =\, & \int [L=l, A=a, X=x, \Delta=\delta \mid L=l', T^*_a = t_a] \d\nu(x,\delta,l) \\
    \stackrel{\mathrm{CAR}}{=}\, & \int [L=l, A=a, X=x, \Delta=\delta \mid L=l', T^*_0 = t_0, T^*_1 = t_1] \d\nu(x,\delta,l) \\
    =\, & \int [L=l, A=a, X=x, \Delta=\delta \mid L=l', T^*_0 = t_0, T^*_1 = t_1] \d\nu(x,\delta,l) \\
    =\, & [A=a \mid L=l', T^*_0 = t_0, T^*_1 = t_1]
 \end{align*}
 where the integral to be with respect to the appropriate measure. Since the latter conditional probability sums across $a$ to unity, also
 \begin{align*}
     [A=1 \mid L=l', T^*_1 = t_1] + [A=0 \mid L=l', T^*_0 = t_0] = 1
 \end{align*}
 for all $t_0, t_1$.
 Since the identity holds for $(t_0, t_1)$ and for $(t_0', t_1')$, we have that
 \begin{align*}
     & [A=1 \mid L=l', T^*_1 = t_1] - [A=1 \mid L=l', T^*_1 = t_1'] \\
     =\, & [A=0 \mid L=l', T^*_0 = t_0'] -  [A=0 \mid L=l', T^*_0 = t_0]
 \end{align*}
 for any $(t_0, t_1), (t_0', t_1')$. Setting $t_0=t_0',$ we find that
 \begin{equation*}
     [A=1 \mid L=l', T^*_1 = t_1] - [A=1 \mid L=l', T^*_1 = t_1'] = 0
 \end{equation*}
 for all $t_1,t_1'$. Therefore, the density does not depend on the value of $T_1^*$, and we've shown that
 \begin{align*}
     [A=1 \mid L=l', T_1^*=t_1, T_0^* = t_0]
     & = [A=1 \mid L=l', T_1^*=t_1] \\
     & = [A=1 \mid L=l'],
 \end{align*}
 for all $t_0, t_1$. 
 Since $A$ is binary and this density describes its full distribution, 
 we've shown that $A \ind (T_0^*, T_1^*) \mid L$.
 
 Next, we show that $T^*_a \ind C^*_a \mid A=a, L$ on $T^*_a > C^*_a$. Consider $\delta = 0$. By Equation~\eqref{eq:dens},
 \begin{align*}
    [L=l, A=a, & \Delta=0, X=x \mid L=l', T^*_0=t_0, T^*_1=t_1]  =\,  \\
    & [L=l, \Delta=0, X=x \mid L=l', A=a, T^*_a=t_a, T^*_{1-a}=t_{1-a}] \\
    & \hspace{10mm} \times [A=a \mid L=l', T^*_0=t_0, T^*_1=t_1] \\
    =\, & I(l=l') [\Delta_a^* = 0, C_a^*=x \mid A=a, L=l, T_0^* = t_0, T^*_1=t_1] [A=a \mid L=l] \\
    =\, & I(l=l', t_a>x) [C_a^*=x \mid A=a, L=l, T_a^* = t_a, T^*_{1-a}=t_{1-a}] [A=a \mid L=l],
 \end{align*}
 since $C_a^*=x$ and $T_a^*=t_a$ implies $\Delta_a^* = 0$ when $t_a > x$. 
 Because these values are equal for $t_a, t_a'>x$ and any $t_{1-a}, t_{1-a}'$ (by CAR), we find
 \begin{align*}
     [C_a^*=x \mid A=a, L=l, T_a^* = t_a, T^*_{1-a}=t_{1-a}] 
     =\, & [C_a^*=x \mid A=a, L=l, T_a^* = t_a', T^*_{1-a}=t_{1-a}'],
 \end{align*}
 after dropping the propensity score $[A=a \mid L=l]$. 
 Therefore $(T_0^*, T_1^*) \ind C_a^* \mid A=a, L$ on $T_a^* > C_a^*$. 
 
 Finally, we show that $T_{1-a}^* \ind \Delta_a^* \mid A=a, L, T^*_a$. 
 Consider $\delta = 1$. By Equation~\eqref{eq:dens},
 \begin{align*}
    [L= & l, A=a,  \Delta=1, X=x \mid L=l',  T^*_0=t_0, T^*_1=t_1] 
      \\
  &  =\, [L=l, \Delta=1, X=x \mid L=l', A=a, T^*_a=t_a, T^*_{1-a}=t_{1-a}] \\
    & \hspace{10mm} 
    \times [A=a \mid L=l, T^*_0=t_0, T^*_1=t_1] \\
   &  =\,  I(l=l') [\Delta_a^* = 1, T_a^* = x \mid A=a, L=l, T_a^*=t_a, T_{1-a}^* = t_{1-a}] [A=a \mid L=l]\\
   &  =\,  I(l=l', t_a = x) [\Delta_a^* = 1 \mid A=a, L=l, T_a^*=x, T_{1-a}^* = t_{1-a}] [A=a \mid L=l].
 \end{align*}
 By dropping the propensity score term $[A=a \mid L=l]$ and setting the remaining conditional density equal 
 for $t_a,t_a' = x$ and any $t_{1-a},t_{1-a}'$, we find that for all $a,l,x$,
 \begin{align*}
      [\Delta_a^* = 1 \mid & A=a, L=l, T_a^*=x, T_{1-a}^* = t_{1-a}] \\
     =\, & [\Delta_a^* = 1 \mid A=a, L=l, T_a^*=x, T_{1-a}^* = t_{1-a}'];
 \end{align*}
 therefore $(T_{1-a}^*) \ind \Delta_a^* \mid A=a, L, T^*_a$, as $\Delta^*_a$ is dichotomous.

 Now, we establish the reverse direction. Assume the conditional independence statements. 
 To establish coarsening at random, we use Equation~\eqref{eq:dens} to study the densities
 \begin{align*}
     [L & =l, A=a,  \Delta=\delta, X=x \mid L=l', T^*_0=t_0, T^*_1=t_1] = \\
&     =\,  [L=l, \Delta=\delta, X=x \mid L=l', A=a, T^*_a=t_a, T^*_{1-a}=t_{1-a}] \\
    & \hspace{10mm} 
    \times [A=a \mid L=l, T^*_0=t_0, T^*_1=t_1] \\
 &   =\,  I(l=l') [\Delta_a^*=\delta, X^*_a = x \mid A=a, L=l, T^*_a=t_a, T^*_{1-a}=t_{1-a}] [A=a \mid L=l].
 \end{align*}
 When $\delta=0$, the first conditional density
 \begin{align*}
     [\Delta^*_a=0, & C_a^*=x \mid A=a, L=l, T^*_a=t_a, T^*_{1-a}=t_{1-a}]  \\
    =\, & I(t_a>x) [C_a^*=x \mid A=a, L=l, T^*_a=t_a, T^*_{1-a}=t_{1-a}] \\
    =\, & I(t_a>x) [C_a^*=x \mid A=a, L=l, T^*_a=t_a] \\
    =\, & I(t_a>x) [C_a^*=x \mid A=a, L=l, T^*_a=x+1],
 \end{align*}
 since the density does not depend on $t_a$ as long as $t_a>x$. Note, we write $T_a^* = x+1$ in the conditioning set, but we could have equivalently written $T_a^* = x+\epsilon$ for any $\epsilon>0$. 
 When $\delta=1$, the same term 
 \begin{align*}
      [\Delta_a^*=1, & T_a^*=x \mid A=a, L=l, T^*_a=t_a, T^*_{1-a}=t_{1-a}] \\
     =\, & I(t_a=x) [\Delta_a^* = 1 \mid A=a, L=l, T_a^* = x, T_{1-a}^* = t_{1-a}] \\
     =\, & I(t_a=x) [\Delta_a^*=1 \mid A=a, L=l, T_a^* = x].
 \end{align*}
 
 Putting these together, for any $(l', t_0, t_1) \in \mathcal{X}(l, a, \delta, x)$, the density
  \begin{align*}
  [L=l  , A=a, & \Delta=\delta, X=x \mid L=l', T^*_0=t_0, T^*_1=t_1] \\
    =\, & [A=a\mid L=l] \Big\{ \delta [\Delta_a^*=1 \mid A=a, L=l, T_a^* = x] \\
    \hspace*{5mm}  & + (1-\delta) [C_a^*=x \mid A=a, L=l, T_a^* = x+1] \Big\},
 \end{align*}
 since the indicators always equal one by Lemma \ref{lem:X}. Therefore the density is a 
 measurable function of the observed data $(l, a, \delta, x)$ and hence CAR holds.
 
 
We now turn to result on sequential CAR, which in this case means that CAR holds at each stage
of coarsening. Consider the mapping from full data to observed data defined by 
\begin{align*}
    & \Phi(L,T^{\full}_0, T^{\full}_1; A, C^{\full}_0, C^{\full}_1) \\
    & \hspace{15mm} = \Bigl( L, A, \Delta=I(T^{\full}_A \leq C^{\full}_A), X=\min\{T^{\full}_A, C^{\full}_A\} \Bigr). 
\end{align*}
In what follows, it will be important to remember that all coarsening variables (here, $A$ and $C^{\full}_0, C^{\full}_1$) appear to the right of the semi-colon.

We recall that the observed data mapping satisfies CAR if and only if for each $a=0,1$, 
 \begin{align*}
     & (T^{\full}_0, T^{\full}_1) \ind A \mid L, \\
     & (T^{\full}_0, T^{\full}_1) \ind C^{\full}_a \mid A=a, L \text{ on } C^{\full}_a < T^{\full}_a, \\
     & T_{1-a}^{\full} \ind I(T_a^{\full} \leq C_a^{\full}) \mid A=a, L, T^{\full}_a.
 \end{align*}
Define the mappings
\begin{align*}
 \Phi_{\text{int}}(L,T^{\full}_0, T^{\full}_1; A) 
 & = \bigl(L, A, T^{\full}=T^{\full}_A \bigr), \\
 \Phi_{\text{obs}}(L,A,T^{\full}; C^{\full}) 
 & = \bigl( L,A,X=T^{\full}\wedge C^{\full},\Delta=I(T^{\full}\leq C^{\full}) \bigr).
\end{align*}
The mapping $\Phi_{\text{int}}(L,T^{\full}_0, T^{\full}_1; A)$ coarsens the uncensored potential failure time outcomes by $A$; this coarsened but uncensored data is then subject to right censoring, represented by $ \Phi_{\text{obs}}(L,A,T^{\full}; C^{\full}).$
In particular, we see that the observed data 
can be written as  
$\Phi_{\text{obs}}\{\Phi_{\text{int}}(L,T^{\full}_0,T^{\full}_1; A) ; C^{\full}\}.$

The data induced under 
$\Phi_{\text{int}}(L,T^{\full}_0, T^{\full}_1; A)$ is evidently contained
within that induced by $\Phi_{\text{obs}}(L,A,T^{\full}; C^{\full}),$ which itself is evidently
equivalent to that generated by $\Phi(L,T^{\full}_0, T^{\full}_1; A, C^{\full}_0, C^{\full}_1).$
We next show that both $\Phi_{\text{int}}$ and $\Phi_{\text{obs}}$ may 
each be written in terms of the full mapping $\Phi$. First, recalling that all coarsening
variables appear to the right of the semi-colon in $\Phi,$ we note that
\begin{align}
  \label{eq:car-int}
\Phi(L,T^{\full}_0, T^{\full}_1; A, \infty, \infty) = 
\bigl( L, A, \Delta=I(T^{\full}_A \leq \infty), X=\min\{T^{\full}_A, \infty \} \bigr). 
 \end{align}
 Since $T^{\full}_A < \infty$ almost surely, we see that \eqref{eq:car-int} reduces to
$ \bigl( L, A, 1, T^{\full}_A \} \bigr),$ which is equivalent to $ \bigl( L, A, T^{\full}_A \bigr) = \Phi_{\text{int}}(L,T^{\full}_0, T^{\full}_1; A).$
Second, noting again that all coarsening
variables appear to the right of the semi-colon in $\Phi,$ we have
\begin{align}
\nonumber
\Phi\bigl( (L,A), T^*, T^*; & \emptyset, C^*, C^* \bigr) \\
& = \Bigl( (L, A), \emptyset,  \Delta=I(T^{\full} \leq C^{\full}), X=\min\{T^{\full}, C^{\full}\} \Bigr);
\label{eq:car-obs}
\end{align}
however, the information in \eqref{eq:car-obs} is just
$\Phi_{\text{obs}}(L,A,T^{\full}; C^{\full}).$
%
%
Thus, because the data at each stage of coarsening has been represented through a suitable choice of $\Phi,$ we can now apply the CAR characterization for $\Phi$ stated earlier. Based on \eqref{eq:car-int}, the CAR  characterization for $\Phi_{\text{int}}$ is for each $a=0,1$,
\begin{align*}
    & (T^{\full}_0, T^{\full}_1) \ind A \mid L, \\
    & (T^{\full}_0, T^{\full}_1) \ind \infty \mid A=a, L \text{ on } \tau < T^{\full}_a, \\
    & T_{1-a}^{\full} \ind I(T_a^{\full} \leq \infty) \mid A=a, L, T^{\full}_a.
\end{align*}
The conditions in the second and third lines are degenerate because 
$T^{\full}_a \leq \infty$ always, so the CAR characterization for $\Phi_{\text{int}}$ is simply 
\begin{equation}
    (T^{\full}_0, T^{\full}_1) \ind A \mid L. \label{eq:seq-car1}
\end{equation}
This condition is both necessary and sufficient for CAR at this stage of coarsening.
Based on \eqref{eq:car-obs}, the CAR characterization for $\Phi_{\text{int}}$ is for each $a=0,1$
\begin{align*}
     & (T^{\full}, T^{\full}) \ind \emptyset \mid (L,A), \\
     & (T^{\full}, T^{\full}) \ind C^{\full} \mid  (L,A) \text{ on } C^{\full} < T^{\full}, \\
     & T^{\full} \ind I(T^{\full} \leq C^{\full}) \mid  (L,A), T^{\full}.
\end{align*}
The condition that ``$A=a$'' does not appear in the second and third lines
since $A$ does not play the role of a coarsening variable here; moreover,
the conditions in the first and third line are also degenerate, so the 
CAR characterization for $\Phi_{\text{int}}$ reduces to
$T^{\full} \ind C^{\full} \mid L, A \text{ on } C^{\full} < T^{\full}$, 
which can be equivalently expressed as 
\begin{equation}
    T^{\full}_a \ind C^{\full}_a \mid L, A=a \text{ on } C^{\full}_a < T^{\full}_a
    \mbox{ for each $a=0,1$}. \label{eq:seq-car2}
\end{equation}
Similarly to the first stage, this condition is also necessary and sufficient 
for CAR at this stage of coarsening.  The proof is now complete, as
the stated conditions \eqref{eq:seq-car1} and \eqref{eq:seq-car2} are those
given in the statement of the theorem, and are necessary and sufficient for CAR
at each stage.

\end{proof}

\section{Observed Data Identification}

\begin{proof}[Lemma \ref{lem:surv-ident}]
    Using results in \citet[Theorem 11]{gill1990survey}, 
    \begin{align*}
        K^{\full}(s\mid a,l)
        &= \Pf (C^*_a > s \mid A=a, L=l) \\
        &= \Prodi_{u\in (0,s]} 
        \left\{ 1 - \d \Lambda_C^{\full}(u \mid a, l)
        \right\},
    \end{align*}
    where $\prodi$ is the product integral and $\Lambda_C^{\full}$ is the cumulative hazard. The cumulative hazard may be identified as
    \begin{align*}
        \Lambda_C(t; a, l)
        & = \int_{(0, t]} \frac{\d \P(X \leq u, \Delta=0 \mid A=a, L=l)}{\P(X > u, \Delta=1 \text{ or } X \geq u, \Delta=0 \mid A=a, L=l)} \\
        & = \int_{(0, t]} \frac{\d \P(X \leq u, \Delta=0 \mid A=a, L=l)}{\Pf(T^*_a > u, C^*_a \geq u \mid A=a, L=l)} \\
        & \stackrel{(eq. 1)}{=} \int_{(0, t]} \frac{\Pf(T^*_a > u \mid A=a, L=l) \d \Pf(C^*_a \leq u \mid A=a, L=l)}{\Pf(T^*_a > u, C^*_a \geq u \mid A=a, L=l)} \\
        & \stackrel{(eq. 2)}{=} \int_{(0, t]} \frac{\d \Pf(C^*_a \leq u \mid A=a, L=l)}{\Pf(C^*_a \geq u \mid A=a, L=l)} \\
        & = \Lambda_C^{\full}(t; a, l).
    \end{align*}

    We now justify equalities (eq.\ 1) and (eq.\ 2). The first equality follows from
    \begin{align*}
         \P(X\leq t, \Delta=0 & \mid A=a, L=l) 
        =\,  \Pf(C^*_a \leq t, C^*_a < T^*_a \mid A=a, L=l) \\
        =\, & \int_{(0, t]} \Pf(T^*_a > u \mid A=a, L=l) \d \Pf(C^*_a \leq u \mid A=a, L=l),
    \end{align*}
    making use of the condition that $T^*_a \ind C^*_a | A = a, L = l$
    on $T^*_a > C^*_a.$ The second equality follows from
    \begin{align*}
        \frac{\Pf(T^*_a > u, C^*_a \geq u \mid A=a, L=l)}{\Pf(T^*_a > u \mid A=a, L=l)} 
        & =  \Pf(C^*_a \geq u \mid T^*_a > u, A=a, L=l) \\
        & =  1-\Pf(C^*_a < u \mid T^*_a > u, A=a, L=l) \\
        & =  1-\Pf(C^*_a < u \mid A=a, L=l) \\
        & = \Pf(C^*_a \geq u \mid A=a, L=l),
    \end{align*}
    the third equality again making use of the indicated independence.

    Finally, the result follows by plugging into the product integral above. That $H^* = H$ follows analogously. 
\end{proof}

\begin{proof}[Theorem \ref{prop:identif}]
    Notice that
    \begin{align*}
        \E \{ \varphi_{\eta, a} (t; \mO; \P) \}
        & = \E \left\{ \frac{I(A=a)}{\pi(A; L)} \frac{\Delta}{K(X-; A, L)} I(X > t) \right\} \\
        & = \E^* \left\{ \frac{I(A=a)}{\pi(a; L)} \frac{I(T^*_a \leq C^*_a)}{K^*(T^*_a-; a, L)} I(T^*_a > t) \right\} \\
        & = \E^* \left\{ \frac{I(A=a)}{\pi(a; L)} \frac{I(T^*_a > t)}{K^*(T^*_a-; a, L)} \P^* \left( T^*_a \leq C^*_a \middle| A=a, L, T^*_a \right) \right\} \\
        & = \E^* \left[ \frac{I(A=a)}{\pi(a; L)} \frac{I(T^*_a > t)}{K^*(T^*_a-; a, L)} \Bigl\{ 1 - \P^* \left( C^*_a < T^*_a \middle| A=a, L, T^*_a\right) \Bigr\} \right].
    \end{align*}

The expression $\P^* \left( T^*_a \leq C^*_a \mid A=a, L, T^*_a \right)$ in the above is rewritten so that the CAR assumption, which furnishes conditional independence ``on $C^*_a < T^*_a$,'' may be leveraged. In particular, under this assumption, we can write
\begin{align*}
\P^* \bigl( C^*_a < T^*_a  & \mid A=a, L, T^*_a \bigr) 
= \P^* \left( C^*_a < u \mid A=a, L \right) \bigl|_{u= T^*_a} \\
& = 
\bigl\{ 1- \P^* \left( C^*_a \geq u \mid A=a, L\right)  \bigr \} \bigl|_{u= T^*_a}
= 1 - K^*(T^*_a-; a, L);
\end{align*}
substitution into the last expression  given for
$ \E \{ \varphi_{\eta, a} (t; \mO; \P) \}$
then leads to 
\begin{align*}
  \E \{ \varphi_{\eta, a} & (t; \mO; \P) \} 
          = \E^* \left\{ \frac{I(A=a)}{\pi(a; L)} I(T^*_a > t) \right\}  \\
         &  = \E^* \left\{ \frac{I(T^*_a > t)}{\pi(a; L)} \P^* \left( A=a \middle|  L \right) \right\} \\
         & = \E^* \left\{ I(T^*_a > t) \right\},
\end{align*}
proving the desired identification result.
\end{proof}

\section{On other influence functions for $\eta_a(t)$}
\label{sec:litreview:surv:class}

\subsection{A Missing Data Derivation of 
\eqref{eq:tsiatisclass}}
\label{sec:class:tsiatis}

In this section, we derive the class of influence functions reported in \citet{anstrom2001utilizing,bai2013doubly,bai2017optimal} under a particular missing data model, rather than the causal model considered in this paper. 

Let the full data as $(L,T^*),$ and denote the full data estimand as $\psi(t) := \PP^*(T^*>t)$ for some $t>0$. In addition, define $C^*$ to be a potential censoring time. As will be shown below, \citet{anstrom2001utilizing,bai2013doubly,bai2017optimal} essentially derive their class using results in \citet{tsiatis2006semiparametric} by assuming that  
$T^*_1 = T^*$ and that $C^*_1 = C^*;$ critically, 
$(T^*_0,C^*_0)$ is implicitly considered to be missing (or, just $T^*_0$, if we additionally assume $C^*_1 = C^*_0$). 

We begin by defining $R \in \{0,1\}$ to be a binary indicator variable, and assuming that the observed data arises from the mapping
\begin{equation*}
    \Phi(L, T^*; R,  C^*) = (L, R, R\min\{T^*,  C^*\}, R I(T^*\leq C^*)) = (L, R, R X, R \Delta),
\end{equation*}
where $X = T^* \wedge C^*$ and $\Delta=I(T^*\leq  C^*).$ Evidently, $R = 1$ implies that  $(L,X,\Delta)$ is observed, 
whereas $R = 0$ leads only to the observation of $L.$  Following \citet{tsiatis2006semiparametric}, this observed data construction 
amounts to using a coarsening variable $\mC \in(0,\infty)$ that is given by 
\begin{equation*}
    \mC = 0 I(R=0) + X I(R=1, \Delta = 0) + \infty I(R=1, \Delta=1).
\end{equation*}
The variable $\mC$ cleary reflects the indicated coarsening pattern and its structure reveals that the coarsening reflected
in $\Phi$ is monotone.

Note that the construction above does not explicitly assume the existence of a treatment variable. However, upon equating $R$ with $A$, it
can be seen that the above construction amounts to considering the problem of estimating $\psi(t) = P\{ T^* > t\} = P\{ T^*_1 > t\}$ from the subset of observations where $A=1$ only; when $A = 0$, the information for estimating $\psi(t)$ is considered to be missing.

In deriving the corresponding class of influence functions, we assume that $\PP^*(R=1, T^* \leq  C^* \mid L, T^*) > 0,$ and 
also that the following version of sequential CAR also holds: 
\begin{align*}
    & R \ind T^* \mid L, \\
    &  C^* \ind T^* \mid L, R=1 \text{ on } C^* < T^*.
\end{align*}
It follows that the coarsening probabilities simplify as $\PP^*(R=1\mid T^*, L) = \pi(1; L)$ and $\PP^*(C^* > u \mid T^*, L, R=1) = K(u; L)$ and hence $\PP^*(R=1,\Delta=1 \mid T^*, L) = \pi(1; L) K(X-; L)$ when $R,\Delta=1$.

For simplicity, assume both $\pi(1;L)$ and $K(t;L)$ are known. By \citet[Theorems 7.2 and 8.3]{tsiatis2006semiparametric}, the 
class of observed-data influence functions under the indicated monotone coarsening mechanism is given by
\begin{equation}
\label{Tsiatis R class}
    \frac{I(R=1, \Delta=1)}{\pi(1;L) K(X-; L)} \{I(T^*>t) - \psi \} + \mathrm{AS},
\end{equation}
where $\mathrm{AS}$ is the ``augmentation space'', 
since $I(T^*>t) - \psi$ is the unique full data influence function. 

Before characterizing the augmentation space, we define
\begin{align*}
    \lambda_u & = \PP^*(\mC=u \mid \mC \geq u, L, T^*), \\
    K_u & = \PP^*(\mC>u \mid L, T^*).
\end{align*}
When $u=0$, these simplify as 
\begin{align*}
    \lambda_0 &= \PP^*(R=0\mid T^*,L) = 1-\pi(1; L), \\
    K_0 &= \pi(1; L).
\end{align*}
When $u>0$ is finite, 
these expressions simplify as
\begin{align*}
    \lambda_u &= I(T^*>u) \PP^*( C^* > u \mid  C^* \geq u, T^* > u, R=1, L); \\ 
    K_u &     = \PP^*( C^* > u \mid L, R=1) \pi(1; L).
\end{align*}

By \citet[Theorem 9.2]{tsiatis2006semiparametric}, a typical element of the augmentation space may be written as
\begin{align*}
    & \frac{I(\mC=0) - \lambda_0 I(\mC \geq 0)}{K_0} \tilde h_1(L) + \int_0^\infty \frac{I(\mC=u) - I(\mC \geq u) \lambda_u}{K_u} \tilde h_2(u; L) \\
    =\, &  \frac{(1-R) - \{1-\pi(1; L)\}}{\pi(1; L)} \tilde h_1(L) + \int_0^\infty \frac{R \,\d N_C(u) - R Y^{\dagger}(u) \lambda_u}{\pi(1; L)\PP^*( C^* > u \mid R=1, L)} \tilde h_2(u; L) \\
    =\, &  - \frac{R-\pi(1; L)}{\pi(1; L)} \tilde h_1(L) + \frac{R}{\pi(1; L)} \int_0^\infty \frac{\d M_C(u; 1,L)}{\PP^*( C^* > u \mid R=1, L)} \tilde h_2(u; L).
\end{align*}
Substituting this into \eqref{Tsiatis R class} and treating $R$ as being equivalent to $A,$ we obtain
\begin{align*}
    \frac{I(A=1)}{\pi(1;L)} \frac{\Delta}{K(X-; L)} & \{I(T^*>t) - \psi(t) \} -  \frac{I(A=1)-\pi(1; L)}{\pi(1; L)} \tilde h_1(L) \\
    & + \frac{I(A=1)}{\pi(1; L)} \int_0^\infty \frac{\d M_C(u; 1,L)}{\PP^*( C^* > u \mid A=1, L)} \tilde h_2(u; L).
\end{align*}
Since $\tilde h_1(L)$ is arbitrary, we can use the second identity in Lemma \ref{lem:rr-lems} to algebraically rewrite the above expression as 
\begin{align*}
    \frac{I(A=1)}{\pi(1;L)} \frac{\Delta}{K(X-; L)} & I(T^*>t)  - \psi(t) - \frac{I(A=1)-\pi(1; L)}{\pi(1; L)} h_1(L) \\ & + \frac{I(A=1)}{\pi(1; L)} \int_0^\infty \frac{\d M_C(u; 1,L)}{\PP^*( C^* > u \mid A=1, L)} h_2(u; L),
\end{align*}
where $h_1(L) = \tilde h_1(L)+\psi(t)$ and $h_2(u;L) =  \tilde  h_2(u; L) + \psi(t)$. It is not difficult to see that this last equation is simply 
\eqref{eq:tsiatisclass} with $a = 1$, hence the same class as that given in \citet{bai2013doubly}.

\subsection{Proof of Theorem \ref{thm: EIF equal}}
\label{sec:vm:surv_equiv}

In this section, we prove that two forms of the efficient influence function for 
$P(T_a > t)$ that appear in the literature do in fact agree. We stress that the result 
applies in either continuous or discrete time or a mixture of the two.

For notational simplicity, we suppress the conditioning set $A, L$. We start by simplifying the uncentered efficient influence function representation
\begin{align*}
H(t) - & \int_{(0,t]} \frac{I(A=a)}{\pi(a; L)} \frac{H(t)}{H(u)}  \frac{\d N_T(u) - Y(u) \d\Lambda_T(u)}{K(u-)} \\
    =\, & H(t) 
    - \frac{I(A=a)}{\pi(a; L)} H(t) \bigg\{ \int_{(0,t]} \frac{\d I(\Delta=1, X \leq u)}{K(u-) H(u)} - \int_{(0,t]} \frac{I(X \geq u) \d\Lambda_T(u)}{K(u-) H(u)} \bigg\}.  
\end{align*}
The first integral simplifies as
\begin{equation*}
    \int_{(0,t]} \frac{\d I(\Delta=1, X \leq u)}{K(u-) H(u)}
    = \frac{\Delta I(X \leq t)}{K(X-) H(X)},
\end{equation*}
which can be seen by splitting the integration domain into three parts: $(0,X), \{X\},$ and $(X,t]$.
We now expand the remaining integral using integration by parts. The sum
\begin{align*}
 \int_{(0,t]} \frac{Y(u)}{K(u-)} \d & \left\{ \frac{1}{H(u)} \right\} + \int_{(0,t]} \frac{Y^{\dagger}(u)}{H(u)} \d \left\{ \frac{1}{K(u)} \right\} \\
    =\, & I(X > t) \left[ \int_{(0,t]} \frac{1}{K(u-)} \d \left\{ \frac{1}{H(u)} \right\} + \int_{(0,t]} \frac{1}{H(u)} \d \left\{ \frac{1}{K(u)} \right\} \right] \\
    & \hspace{5mm} + I(X \leq t) \biggl[ \int_{(0,X]} \frac{1}{K(u-)} \d \left\{ \frac{1}{H(u)} \right\} + 
    \Delta \int_{(0,X)} \frac{1}{H(u)} \d \left\{ \frac{1}{K(u)} \right\} \\
    & \hspace{5mm} + (1-\Delta) \int_{(0,X]} \frac{1}{H(u)} \d \left\{ \frac{1}{K(u)} \right\} \biggr] \\
    =\, & I(X > t) \left[ \int_{(0,t]} \frac{1}{K(u-)} \d \left\{ \frac{1}{H(u)} \right\} + \int_{(0,t]} \frac{1}{H(u)} \d \left\{ \frac{1}{K(u)} \right\} \right] \\
     & \hspace{5mm} + I(X \leq t) \biggl[ \int_{(0,X]} \frac{1}{K(u-)} \d \left\{ \frac{1}{H(u)} \right\} 
    + \int_{(0,X]} \frac{1}{H(u)} \d \left\{ \frac{1}{K(u)} \right\} \\
    & \hspace{5mm} - \Delta \int_{\{X\}} \frac{1}{H(u)} \d \left\{ \frac{1}{K(u)} \right\} \biggr].
\end{align*}
Now, the standard integration by parts formula \citep[Theorem A.1.2]{fleming1991counting} applies since $K$ and $H$ are both right-continuous yielding for any $v>0$ that
\begin{align*}
    \int_{(0,v]} \frac{1}{K(u-)} \d \left\{ \frac{1}{H(u)} \right\} + \int_{(0,v]} \frac{1}{H(u)} \d \left\{ \frac{1}{K(u)} \right\}
    = \frac{1}{K(v)H(v)} - 1.
\end{align*}
Applying this result for $v=X$ and $v=t$ continues the simplification and shows
\begin{align*}
\int_{(0,t]} \frac{Y(u)}{K(u-)} \d & \left\{ \frac{1}{H(u)} \right\} + \int_{(0,t]} \frac{Y^{\dagger}(u)}{H(u)} \d \left\{ \frac{1}{K(u)} \right\} \\
    =\, & I(X > t) \left[ \frac{1}{K(t)H(t)} - 1 \right] + I(X \leq t) \times \\
    & \left[ \frac{1}{K(X)H(X)} - 1 - \Delta \int_{\{X\}} \frac{1}{H(u)} \d \left\{ \frac{1}{K(u)} \right\} \right] \\
    =\, & \frac{1}{K(X\wedge t)H(X \wedge t)} - 1 - \frac{\Delta I(X \leq t)}{H(X)} \left\{ \frac{1}{K(X)} - \frac{1}{K(X-)} \right\}.
\end{align*}
Now that both integrals are simplified, we plug them back into the efficient influence function and find that
\begin{align*}
 H(t) - \int_{(0,t]} \frac{I(A=a)}{\pi(a; L)} & \frac{H(t)}{H(u)} \frac{d N_T(u) - Y(u) \d\Lambda_T(u)}{K(u-)} \\
    =\, & H(t) 
    - \frac{I(A=a)}{\pi(a; L)} H(t) \bigg[ - \frac{1}{K(X\wedge t) H(X \wedge t)} + 1 \\
    & + \frac{\Delta I(X \leq t)}{H(X)} \frac{1}{K(X)} + \int_{(0,t]} \frac{Y^{\dagger}(u)}{H(u)} \d \left\{ \frac{1}{K(u)} \right\} \bigg].
\end{align*}
Rearranging terms, we find that the influence function equals
\begin{align}
\nonumber
    \frac{I(A=a)}{\pi(a; L)} & \frac{I(X>t)}{K(t)} 
    - \frac{I(A=a) - \pi(a; L)}{\pi(a; L)} H(t) + \frac{I(A=a)}{\pi(a; L)} H(t) \times \\
\label{eq:prelim-IF}
   &
    \bigg[ \frac{(1-\Delta)I(X\leq t)}{K(X)H(X)} - \int_{(0,t]} \frac{Y^{\dagger}(u)}{H(u)} \d \left\{ \frac{1}{K(u)} \right\} \bigg]. 
\end{align}
However, since
\[
- \int_{(0,t]} \frac{Y^{\dagger}(u)}{H(u)} \d \left\{ \frac{1}{K(u)} \right\}
= - \int_{(0,t]} \frac{Y^{\dagger}(u)}{H(u) K(u)} \d \Lambda_C(u),
\]
we can write
\[
\frac{(1-\Delta)I(X\leq t)}{K(X)H(X)} - \int_{(0,t]} \frac{Y^{\dagger}(u)}{H(u)} \d \left\{ \frac{1}{K(u)} \right\}
= \int_{(0,t]} \frac{1}{H(u) K(u)} d M_C(u),
\]
where $dM_C(u) = d N_C(u) - Y^\dag(u) d \Lambda_C(u).$
To complete the proof, we note that Lemma \ref{lem:rr-lems},
specifically \eqref{eq:rr1}, gives
\[
   \frac{I(X > t)}{K(t)} 
     = \frac{\Delta}{K(X-)} I(X > t) + \int_{(t,\infty)} \frac{\d M_C(u)}{K(u)}
\]
and hence that \eqref{eq:prelim-IF} can be written
\begin{align*}
    \frac{I(A=a)}{\pi(a; L)} & \frac{\Delta}{K(X-)} I(X > t)
    - \frac{I(A=a) - \pi(a; L)}{\pi(a; L)} H(t) \\
    &
    + \frac{I(A=a)}{\pi(a; L)} \bigg[ 
    \int_{(0,t]} \frac{H(t) }{H(u) K(u)} d M_C(u)
    + \int_{(t,\infty)} \frac{\d M_C(u)}{K(u)}\bigg]. 
\end{align*}
Since
$H(t \vee u) = H(t) I\{ u \leq t \} +
H(u) I\{ u > t \},$
the term in the square brackets reduces to
\[
\int_{(0,\infty)} \frac{H(t \vee u)}{H(u)} \frac{d M_C(u)}{K(u)},
\]
proving the desired equivalence.

\end{appendices}

\end{document}